\def\deg{\mbox{\rm deg}}
\def\Im{\mbox{\rm Im}}
\def\dim{\mbox{\rm dim}}
\def\FF{{\mathbb F}}
\def\f2{{\mathbb F}_{2}}
\newcommand{\gb}{\beta}
\newcommand{\gd}{\delta}
\newcommand{\gl}{\lambda}
\def\Im{\mathrm{Im}}
\def\INSm#1{{\color{black}#1}}
\def\INS#1{{\color{black}#1}}
\newtheorem{theorem}{Theorem}[section]
\newtheorem{remark}[theorem]{Remark}
\newtheorem{definition}[theorem]{Definition}
\newtheorem{proposition}[theorem]{Proposition}
\newtheorem{corollary}[theorem]{Corollary}
\begin{document}
  \begin{frontmatter}
  \title{A note on APN permutations in even dimension}
%\author{R. Aragona, A. Caranti, F. Dalla Volta, M. Sala}
%\author[tn]{R. Aragona}%\corref{cor1}\fnref{fn2}
%\ead{ric.aragona@gmail.com}
\author[tn]{M. Calderini}
\ead{marco.calderini@unitn.it}

\author[tn]{M. Sala}
\ead{maxsalacodes@gmail.com}

\author[tn]{I. Villa}%\corref{cor1}\fnref{fn1}
\ead{irene1villa@gmail.com}
%\cortext[cor1]{The authors are members of INdAM-GNSAGA, Italy.}
%\cortext[cor2]{Principal corresponding author}
%\fntext[fn1]{The author thankfully acknowledges support by the
  %Department of Mathematics, University of Trento.}
%\fntext[fn2]{The author thankfully acknowledges support by the
  %Department of Mathematics and Applications, University of Milano-Bicocca and MIUR-%Italy via PRIN `Group theory and applications''.}
%\fntext[fn3]{Yet another author footnote. Indeed, you can have
%any number of author footnotes.}
\address[tn]{Department of Mathematics, University of Trento, Via Sommarive 14,
38100 Povo (Trento), Italy}
%\address[pg]{Department of Mathematics,
% University of Perugia, Via Luigi Vanvitelli 1, 06123 Perugia, Italy}
%
%\pdfinfo{%
%  /Title    ()
%  /Author   ()
%  /Creator  ()
%  /Producer ()
%  /Subject  ()
%  /Keywords ()
%}

\begin{abstract} 
APN permutations in even dimension are vectorial Boolean functions that play a special role in the design of block ciphers.
We study their properties, providing some general results and some applications to the low-dimension cases.
In particular, we prove that none of their components can be quadratic.  \INS{For an APN vectorial Boolean function (in even dimension) with all cubic components}
we prove the existence of a component having a large number of balanced derivatives.
Using these restrictions, we obtain the first theoretical proof of the non-existence of APN permutations 
in dimension $4$. Moreover, we derive some contraints on APN permutations in dimension $6$.
\end{abstract}

\begin{keyword} 
Permutation, Boolean functions, Almost Perfect Nonlinear, Partially bent.

\emph{MSC}: 94A60, 06E30, 20B40.
\end{keyword}

\end{frontmatter}

\section{Introduction}
\label{intro}

A block cipher is a cryptographic primitive that allows the encryption/decryption of fixed-length messages
once a secret key has been shared (called the \emph{session key}). 
Given a fixed key, a block cipher can be viewed as a permutation on the message space. 
For its performance, a block cipher is designed as the composition of many efficient transformations, 
called \emph{rounds}. In any round, a \emph{round key} is derived from the session key and acts
on the message space, sometimes on the whole space (as in translation-based ciphers \cite{CGC-cry-art-carantisalaImp}, such as the AES \cite{CGC2-cry-book-daemen2013design}
or the SERPENT \cite{CGC2-cry-art-biham1998serpent}), sometimes on portions thereof (as in Feistel ciphers \cite{CGC2-cry-art-feistel1975some}).
The action of a round key is traditionally a translation, that is, any (portion of the) message is viewed as a binary vector and is summed with the round key (or XORed, in computer science language). These traditional ciphers are the
most common and include
some notable Feistel ciphers, such as  DES \INSm{\cite{fips199946}}, % 3DES [cite], 
Camellia \cite{CGC2-cry-art-aoki2001camellia} and Kasumi \cite{CGC2-cry-art-unknown20103rd}, and all translation-based ciphers,
 but some alternative actions of the round keys can be found even in block ciphers which have been used
in practice, such as  IDEA \cite{CGC-cry-art-IDEA91}, SAFER \cite{CGC2-cry-art-Mass94} and GOST \cite{CGC2-cry-art-gost28147}.
Efficiency reasons explain why in traditional ciphers all-but-one round components are affine maps,
being the so-called \emph{S-box} the only exception, and the latter is then called the \emph{non-linear component}
of the cipher.

If also the \INSm{S-boxes} would be affine maps, the block cipher would be trivial to break. Unsurprisingly, the most effective attacks to traditional block ciphers start from the study of their \INSm{S-boxes} and their non-linear behavior, as explained below.
In particular, the so-called \emph{differential cryptanalysis} \cite{CGC2-cry-art-biham1991differential} has proved especially effective.
The most basic version of this attack successfully applies when two plaintexts with a (known) fixed difference 
lead after the last-but-one round to outputs whose difference takes a (known) value with a probability significantly
higher than the average.
To minimize the success probability of this attack, the theory of vectorial Boolean functions (\cite{CGC-cd-art-carlet10})
has identified an ideal property for the involved \INSm{S-box}, that is, to be an Almost Perfect Nonlinear function
 (or {APN} for short).
Relevant definitions and properties of APN functions can be found in Section 2. It is important to note
that the APN functions used in translation-based ciphers have to be bijective, that is, they must
be permutations. 
\INSm{In the case of the ciphers AES and SERPENT the S-boxes used are not APN. As regards AES, there is no known example of an 8-bit APN permutation, while for SERPENT there does not exist a 4-bit APN permutation.}\\%dimensions nel caso di Aes e serpent le S-box utilizzate non sono APN poiché o non sono conosciute per il momento o non esistono APN nelle dimensioni in cui sono definite le Sbox}
 However, in Feistel ciphers is not necessary \INSm{that the S-boxes are invertible} (not even in DES), although they are in some, 
like in Kasumi. \INSm{In Kasumi the S-boxes are APN permutation, as they are defined over an odd dimensional space.}

Although much is known for APN permutations in odd dimension, implementative reasons make a case 
for the use of APN functions in even dimension, especially when the dimension is a power of $2$.
Unfortunately, little is known at present for these cases and what is known relies heavily
on computer checking:
\begin{itemize}
\item It is known that there is no APN permutation in dimension 4 (the first non-trivial case), but the proof
relies on extensive computations providing no theoretical insight on the reasons behind their non-existence.
\item It is known that there is at least one APN permutation with dimension 6, called the Dillon's permutation
(\cite{CGC2-cry-art-browning2010apn}). Interestingly, prior to \cite{CGC2-cry-art-browning2010apn} it was conjectured
that no APN permutations in even dimension could exist, since nobody could find any example \cite{CGC-cry-art-Hou2}.
\item
It is known that any cubic APN permutation in dimension 6 must be CCZ-equivalent to the Dillon's permutation, 
which is itself CCZ-equivalent to a quadratic function, but again this proof is of a purely computational nature
\cite{CGC2-cry-art-lan}.
\item All computationally-found APN permutations in dimension $6$ lack a quadratic component, again
         with no hint to as why \cite{CGC2-cry-art-lan}.
\item Very little is known on a putative APN permutation in dimension $8$ or higher.
\end{itemize}

In this paper we present some advances in the theoretical understanding of properties for APN permutations
of even dimension. After having provided some notation and preliminaries in Section \ref{sec:1}, we claim
the following main results:
\begin{itemize}
\item in Section \ref{sec:2}, any APN permutation must lack a partially-bent component; this implies
         that it must also lack a quadratic component;
\item in Section \ref{sec:3}, any cubic APN function (not necessarily a permutation) must have a component
         with a large number of balanced derivatives; moreover, we classify cubic Boolean functions
        in dimension $4$ according to their number of balanced derivatives;
\item in Section \ref{sec:4}, we derive two immediate consequences of our previous results; the first relates
         to dimension $4$ and is the first-ever theoretical proof of the non-existence of an APN permutation;
         the second relates to dimension $6$ and is the theoretical explanation of the component degrees for
        all known APN permutations.
\end{itemize}

\section{Preliminaries}\label{sec:1}
\INSm{We will denote by $\FF_2$ the finite field with two elements.} Let $m\ge 1$, in the sequel we consider Boolean function $f$ from \INSm{$(\FF_2)^m$} to \INSm{$\FF_2$} and only vectorial Boolean function $F$ from \INSm{$(\FF_2)^m$} to \INSm{$(\FF_2)^m$}. Without loss of generality we will assume $f(0) = 0$ ($F(0)=0$).\\
We denote the \emph{derivative} of $f$ in the direction of \INSm{$a\in (\FF_2)^m$} by $D_a{f}(x)=f(x+a)+f(x)$ and the \emph{image} of $f$ by $\Im(f) = \{f(x) \mid \,x \in \INSm{(\FF_2)^m}\}$ (similarly for \INS{vectorial Boolean functions}).\\
 Let $\gl\in\INSm{(\FF_2)^m}$, we denote by $F_\gl$ the \emph{component} $\sum_{i=1}^m \gl_if_i$ of $F$, where $f_1,\dots,f_m$ are the coordinate functions of $F$. Note that for a \INS{vectorial Boolean function} $F$ we have $D_aF_\gl=(D_aF)_\gl$.

Let $f$ be a \INS{Boolean function}, if $\deg(f)=0,1,2,3$, then $f$ is, respectively, \emph{constant}, \emph{linear}, \emph{quadratic}, \emph{cubic}.
Let $F$ be a \INS{vectorial Boolean function}, we say that $F$ is \emph{quadratic} if $\max_{\lambda \not= 0} \{\deg(F_\lambda)\} =2$,
\emph{cubic} if $\max_{\lambda \not= 0} \{\deg(F_\lambda)\} =3$. 
If all non-zero components of $F$ are quadratic then we say that $F$ is a \emph{pure quadratic}.
If all non-zero components of $F$ are cubic then we say that $F$ is a \emph{pure cubic}.
\begin{definition}
Let $F$ be a \INS{vectorial Boolean function}, for any $a,b\in\INSm{(\FF_2)^m}$  we define
$$
\gd_F(a,b)=|\{x\in\INSm{(\FF_2)^m} \mid\,D_aF(x)=b\}|.
$$
The \emph{differential uniformity} of $F$ is 
$$
\INSm{\gd(F)}=\max_{a,b \in\INSm{(\FF_2)^m}\\
a\neq 0}\gd_F(a,b)\,.
$$
%\INSm{and $F$ is said $\gd$-\emph{differential uniform}}.\\
Those functions  with $\INSm{\gd\INSm{(F)}}=2$ are said \emph{Almost Perfect Nonlinear (APN)}.
\end{definition}

\noindent We denote by  $\mathcal{F}(f)$ the following value related to
the Fourier transform of a \INS{Boolean function} $f$:
$$
\mathcal{F}(f)=\sum_{x\in \INSm{(\FF_2)^m}}(-1)^{f(x)}=2^m-2\INSm{\mathrm{w}_H}(f),
$$
where $\INSm{\mathrm{w}_H}(f)$ is the Hamming weight of $f$, i.e. the number of $x$ 
such that $f(x)=1$. The function is said to be \emph{balanced}
if and only if $\mathcal{F}(f)=0$.
\\

A necessary condition for a \INS{vectorial Boolean function} $F$ to be APN
was provided by Nyberg in \cite{CGC2-cry-art-nyberg1995s}. This condition involves the
derivatives of the components of $F$. It was proved by Berger \emph{et al.} \cite{CGC2-cry-art-berger2006almost} that this condition is also sufficient.

\begin{proposition}[\cite{CGC2-cry-art-nyberg1995s,CGC2-cry-art-berger2006almost}]\label{th:F-APN}
Let $F$ be a \INS{vectorial Boolean function}. Then, for any non-zero $a\in\INSm{(\FF_2)^m}$
$$
\sum_{\gl\in\INSm{(\FF_2)^m}}\mathcal{F}^2(D_aF_\gl)\ge 2^{2m+1}.
$$
Moreover, $F$ is APN if and only if for all non-zero $a\in\INSm{(\FF_2)^m}$
$$
\sum_{\gl\in\INSm{(\FF_2)^m}}\mathcal{F}^2(D_aF_\gl)= 2^{2m+1}.
$$
\end{proposition}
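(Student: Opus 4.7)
The plan is to expand the square $\mathcal{F}^2(D_aF_\gl)$ as a double sum, swap the order of summation with the outer sum over $\gl$, and apply the orthogonality of characters on $(\FF_2)^m$ to collapse the $\gl$-sum into a counting function involving the values $\gd_F(a,b)$. Concretely, I would write
\[
\sum_{\gl\in(\FF_2)^m}\mathcal{F}^2(D_aF_\gl)
= \sum_{\gl} \sum_{x,y}(-1)^{\gl\cdot (D_aF(x)+D_aF(y))}
= 2^m\,\big|\{(x,y) : D_aF(x)=D_aF(y)\}\big|,
\]
since the inner sum over $\gl$ equals $2^m$ when $D_aF(x)=D_aF(y)$ and vanishes otherwise. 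Partitioning the pairs $(x,y)$ by the common value $b=D_aF(x)=D_aF(y)$, this count is precisely $\sum_{b}\gd_F(a,b)^2$, so the quantity we need to estimate becomes $2^m \sum_{b}\gd_F(a,b)^2$.

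The key structural observation is that for $a\neq 0$, the map $x\mapsto x+a$ is a fixed-point-free involution that preserves each fibre $\{x:D_aF(x)=b\}$, since $D_aF(x+a)=D_aF(x)$. Hence $\gd_F(a,b)$ is always even, and in particular $\gd_F(a,b)^2\ge 2\,\gd_F(a,b)$ for every $b$, with equality exactly when $\gd_F(a,b)\in\{0,2\}$. Combined with the obvious total count $\sum_b \gd_F(a,b)=2^m$, this gives
\[
\sum_b \gd_F(a,b)^2 \;\ge\; 2\sum_b \gd_F(a,b) \;=\; 2^{m+1},
\]
and inserting this into the previous display yields the lower bound $2^{2m+1}$.

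For the characterization of equality, the chain of inequalities is tight for a given $a\neq 0$ if and only if $\gd_F(a,b)\le 2$ for every $b\in(\FF_2)^m$. Requiring this for all non-zero $a$ is precisely the definition $\gd(F)=2$, i.e.\ that $F$ is APN.

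I don't anticipate a genuine obstacle here; the only point requiring a moment of care is the evenness of $\gd_F(a,b)$, which is what drives the sharp bound (pure Cauchy--Schwarz would only give $\sum_b \gd_F(a,b)^2 \ge 2^m$, off by a factor of $2$). Getting the factor $2^{m+1}$ rather than $2^m$ is exactly what the $\FF_2$-structure contributes, and it is what makes APN the extremal notion rather than merely ``differentially uniform.''
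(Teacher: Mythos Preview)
Your argument is correct: the character-orthogonality step, the identification with $2^m\sum_b \gd_F(a,b)^2$, the evenness of $\gd_F(a,b)$ via the involution $x\mapsto x+a$, and the resulting inequality $\gd_F(a,b)^2\ge 2\,\gd_F(a,b)$ all hold, and the equality analysis is exactly the APN condition.

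As for comparison: the paper does not supply its own proof of this proposition. It is quoted as a known preliminary result with citations to Nyberg and to Berger~\emph{et al.}, so there is no in-paper argument to compare against. Your proof is the standard one found in those references.
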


We recall the following non-linearity measures for \INS{vectorial Boolean functions}, as introduced in \cite{CGC-cry-art-carantisalaImp,CGC2-cry-art-fontanari2012weakly}:

$$
n_i(F)=|\{\gl\,\in\INSm{(\FF_2)}^m \setminus\{0\}\mid\,\deg(F_\gl)=i\}|,
$$

$$
\hat n (F)=\max_{a\in\INSm{(\FF_2)}^m\setminus\{0\}}|\{\gl\,\in\INSm{(\FF_2)^m} \setminus\{0\}\mid\,\deg(D_aF_\gl)=0\}|,
$$

$$
\bar{\gd}(F)=\max_{a\in\INSm{(\FF_2)}^m\setminus\{0\}}\left(\mathrm{min} \left\{\gd\in\mathbb{N}\mid |\Im(D_aF)|>\frac{2^{m-1}}{\gd}\right\}\right).
$$
\INSm{$\bar{\gd}(F)$ is the \emph{weakly differential uniformity} of $F$}. If $\bar{\gd}(F)=2$ then $F$ is said \emph{weakly-APN}.\\
For a \INS{vectorial Boolean function} we report the following result.

\begin{theorem}[\cite{CGC-cry-art-carantisalaImp,CGC2-cry-art-fontanari2012weakly}]\label{th:nhat}
Let $F$ be a \INS{vectorial Boolean function}, then
\begin{itemize}
\item[1)] $\gd(F)\ge \bar{\gd}(F)$,\\
 in particular if $F$ is APN then it is weakly-APN.
\item[2)] If $F$ is weakly-APN, then $\hat{n}(F)\le 1$,\\
in particular $F$ APN implies $\hat{n}(F)\le 1$.
\end{itemize}
\end{theorem}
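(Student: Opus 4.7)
The plan is to treat the two parts separately, both resting on the observation that $D_aF(x)=D_aF(x+a)$, which makes every value of $D_aF$ occur with even multiplicity and in particular forces $|\Im(D_aF)|\le 2^{m-1}$.

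For part (1), I would fix a nonzero $a$ and sum the intersection counts: $\sum_{b}\gd_F(a,b)=2^m$, with each summand bounded by $\gd(F)$. This gives $|\Im(D_aF)|\ge 2^m/\gd(F)>2^{m-1}/\gd(F)$, so $\gd=\gd(F)\in\mathbb{N}$ already fulfills the strict inequality in the definition of $\bar\gd(F)$, whence $\bar\gd(F)\le \gd(F)$. The ``in particular'' clause is then immediate, because $\bar\gd(F)\ge 2$ always (the bound $|\Im(D_aF)|\le 2^{m-1}$ rules out $\gd=1$).

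For part (2), the plan is to reinterpret the set of $\gl$ with $\deg(D_aF_\gl)=0$ as a linear subspace. Using $D_aF_\gl=\gl\cdot D_aF$, the set
$$\Lambda_a=\{\gl\in(\FF_2)^m : \gl\cdot D_aF(x)\ \text{is constant in}\ x\}$$
is precisely $U_a^{\perp}$, where $U_a$ is the $\FF_2$-span of $\{D_aF(x)+D_aF(0) : x\in(\FF_2)^m\}$. Since translation by $D_aF(0)$ embeds $\Im(D_aF)$ into $U_a$, we get $\dim U_a\ge \log_2|\Im(D_aF)|$.

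If $F$ is weakly-APN, unpacking the definition of $\bar\gd(F)=2$ yields $|\Im(D_aF)|>2^{m-2}$ for every nonzero $a$, whence $\dim U_a>m-2$. The main (and only delicate) step is promoting this strict inequality on a real quantity to the integer bound $\dim U_a\ge m-1$; once this is in place, $|\Lambda_a|=2^{m-\dim U_a}\le 2$, so at most one nonzero $\gl$ lies in $\Lambda_a$, i.e.\ $\hat n(F)\le 1$, and the APN case follows from part (1).
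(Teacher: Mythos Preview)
The paper does not actually supply a proof of this theorem: it is quoted from \cite{CGC-cry-art-carantisalaImp,CGC2-cry-art-fontanari2012weakly} and used as background. So there is no in-paper argument to compare your proposal against.

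That said, your argument is correct and is essentially the standard one from the cited references. For part~(1) the pigeonhole count $\sum_{b}\gd_F(a,b)=2^m$ together with the trivial bound $\gd_F(a,b)\le \gd(F)$ gives $|\Im(D_aF)|\ge 2^m/\gd(F)>2^{m-1}/\gd(F)$, so $\gd(F)$ is an admissible $\gd$ in the definition of $\bar\gd(F)$; and $\bar\gd(F)\ge 2$ follows from $|\Im(D_aF)|\le 2^{m-1}$, exactly as you say. For part~(2) your identification of $\Lambda_a=\{\gl: \gl\cdot D_aF\ \text{constant}\}$ with $U_a^{\perp}$ is the key observation, and the step you flag as ``delicate'' is in fact immediate: from $|\Im(D_aF)|>2^{m-2}$ and $\Im(D_aF)\subseteq D_aF(0)+U_a$ you get $2^{\dim U_a}=|U_a|\ge |\Im(D_aF)|>2^{m-2}$, hence the \emph{integer} $\dim U_a$ satisfies $\dim U_a\ge m-1$, so $|\Lambda_a|\le 2$ and $\hat n(F)\le 1$. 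Nothing is missing.
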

The following theorem is well-known.
\begin{theorem}[\cite{CGC-cry-art-carlet2010vectorial}]\label{th:linapn}
Let $F$ be APN, then $n_1(F)=0$.
\end{theorem}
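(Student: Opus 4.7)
I plan to argue by contradiction, using the APN characterisation of Proposition~\ref{th:F-APN} to force most components of $F$ to be bent, and then invoking Nyberg's classical existence bound for vectorial bent functions to obtain a dimension obstruction. Suppose some component $F_{\lambda_0}$ with $\lambda_0\neq 0$ is linear.

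Since $F_{\lambda_0}$ is linear, for every nonzero $a$ the derivative $D_aF_{\lambda_0}(x)=F_{\lambda_0}(a)$ is a constant Boolean function, whence $\mathcal{F}(D_aF_{\lambda_0})=\pm 2^m$ and $\mathcal{F}^2(D_aF_{\lambda_0})=2^{2m}$. The analogous identity holds trivially for $\lambda=0$, because $F_0\equiv 0$. Substituting these two equalities into Proposition~\ref{th:F-APN} yields
$$
\sum_{\lambda\notin\{0,\lambda_0\}}\mathcal{F}^2(D_aF_\lambda)=2^{2m+1}-2\cdot 2^{2m}=0.
$$
Every summand is non-negative, so $\mathcal{F}(D_aF_\lambda)=0$ for every nonzero $a$ and every $\lambda\notin\{0,\lambda_0\}$. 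Equivalently, $D_aF_\lambda$ is balanced for all nonzero $a$, which is exactly the statement that $F_\lambda$ is bent.

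To finish, I would complete $\lambda_0$ to a basis $\{\lambda_0,\mu_2,\dots,\mu_m\}$ of $(\FF_2)^m$ and form the vectorial function $G=(F_{\mu_2},\dots,F_{\mu_m})\colon(\FF_2)^m\to(\FF_2)^{m-1}$. Any nonzero component of $G$ is a linear combination $F_{\sum\nu_i\mu_i}$ with $\sum\nu_i\mu_i\notin\{0,\lambda_0\}$ (by the choice of basis), hence bent by the previous step; in other words $G$ is a vectorial bent function. Nyberg's classical bound on the existence of vectorial bent functions then forces $m-1\le m/2$, which fails for $m\ge 3$, producing the desired contradiction. The main delicate point is this final dimension check, which implicitly restricts the statement to $m\ge 3$; this is harmless here, since the dimensions of interest in the paper are all at least $4$.
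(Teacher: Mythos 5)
Your proof is correct, but note that the paper does not actually prove this statement: Theorem~\ref{th:linapn} is quoted from \cite{CGC-cry-art-carlet2010vectorial} with no argument supplied, so yours is a genuine addition rather than a variant of an in-paper proof. Your route is the natural one given the tools the paper does set up: if $F_{\lambda_0}$ is linear then $\mathcal{F}^2(D_aF_{\lambda_0})=2^{2m}=\mathcal{F}^2(D_aF_0)$ for every nonzero $a$, and the equality case of Proposition~\ref{th:F-APN} forces every remaining component to have all nonzero derivatives balanced, i.e.\ to be bent; packaging these into $G=(F_{\mu_2},\dots,F_{\mu_m})$ and invoking Nyberg's bound (perfect nonlinear maps from $(\FF_2)^m$ to $(\FF_2)^k$ require $m$ even and $k\le m/2$) kills the case $m\ge 3$. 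Two small observations. First, for odd $m$ the contradiction is already immediate after the bent step, since bent functions exist only in even dimension; the phrase ``forces $m-1\le m/2$'' technically presupposes $m$ even, so it is cleaner to split the two parities, though the conclusion is the same. Second, your restriction to $m\ge 3$ is not merely a convenience: the statement genuinely fails for $m=2$ (e.g.\ $F(x_1,x_2)=(x_1x_2,x_1)$ is APN on $(\FF_2)^2$ and has the linear component $x_1$), which is consistent with your dimension count $m-1\le m/2$ being satisfied exactly at $m=2$. Since the paper only ever uses the theorem for $m\ge 4$, this is harmless, but it is worth making the hypothesis explicit. An alternative, more elementary argument (an affine component confines $\Im(D_aF)$ to an affine hyperplane, which for an APN function must then be all of that hyperplane and hence contain $0$ for suitable $a$) only yields a contradiction when $F$ is a permutation, so your bent-function route is the right one for the general APN statement.
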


Theorem \ref{th:linapn} cannot be extended to weakly-APN functions, since from the classification of bijective \INS{vectorial Boolean functions} for $m=$4 there is one affine equivalent class of weakly-APN functions with $n_1=1$.
\\

Finally we recall some results on quadratic and partially bent \INS{Boolean functions}. \\
The following two results can be found in \cite{CGC-cd-book-macwilliamsI} Chapter 15.%, but see also \cite{CGC-cd-art-carlet10}.

\begin{proposition}[\INSm{\cite{CGC-cd-book-macwilliamsI}}]\label{prop:quad}
Every quadratic function is affinely equivalent to:
\begin{itemize}
\item $x_1x_2 +\dots+x_{2l-1}x_{2l} +x_{2l+1}$ (where $l \le \frac{m-1}{2}$) if it is balanced, 
\item $x_1x_2 +\dots+x_{2l-1}x_{2l}$ (where $l \le m/2$) if it has weight smaller than $2^{m-1}$, 
\item $x_1x_2 +\dots+x_{2l-1}x_{2l} + 1$ (where $l \le m/2$) if it has weight greater than $2^{m-1}$.
\end{itemize}
\end{proposition}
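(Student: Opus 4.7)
The plan is to reduce the statement to the classical classification of alternating bilinear forms over $\mathbb{F}_2$. Given a quadratic Boolean function $f \colon (\mathbb{F}_2)^m \to \mathbb{F}_2$, the first step is to associate to it the polarization
\[
B_f(x,y) \;=\; f(x+y) + f(x) + f(y) + f(0).
\]
Because $\deg f \le 2$, $B_f$ is a symmetric bilinear form, and the identity $B_f(x,x) = 0$ (immediate in characteristic $2$) shows it is alternating.

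By the structure theorem for alternating forms over $\mathbb{F}_2$, there is a basis in which $B_f$ takes the standard hyperbolic shape $\sum_{i=1}^{l}(x_{2i-1}y_{2i}+x_{2i}y_{2i-1})$, where $2l = \mathrm{rank}(B_f) \le m$. In these coordinates I would decompose
\[
f(x) \;=\; q(x) + L(x) + c, \qquad q(x) \;=\; \sum_{i=1}^{l} x_{2i-1}x_{2i},
\]
with $L$ linear and $c = f(0) \in \mathbb{F}_2$; this decomposition is legitimate because $q$ polarizes precisely to $B_f$, so $f - q$ has vanishing bilinear part and must be affine.

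Next I would exploit the remaining affine freedom to normalize $L$ and $c$. A translation $x \mapsto x + b$ changes $L$ by the linear form $B_f(\,\cdot\,, b)$; the image of $b \mapsto B_f(\,\cdot\,, b)$ is exactly the span of the coordinate forms $x_1^*, \ldots, x_{2l}^*$. Hence I can eliminate every part of $L$ that involves $x_1, \ldots, x_{2l}$, leaving a residue $L'$ in the variables $x_{2l+1}, \ldots, x_m$. If $L' \ne 0$, a change of basis among the last $m - 2l$ variables puts $L' = x_{2l+1}$ (which requires $2l+1 \le m$, i.e.\ $l \le (m-1)/2$), and the shift $x_{2l+1} \mapsto x_{2l+1} + c$ then absorbs the constant, producing the first normal form. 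Balancedness is automatic: conditioning on all variables other than $x_{2l+1}$, the function depends nontrivially and linearly on $x_{2l+1}$.

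If instead $L' = 0$, then $f$ is affinely equivalent either to $q$ or to $q + 1$ with $l \le m/2$. A quick count—partition the $2^m$ inputs according to which of the $l$ pairs $(x_{2i-1}, x_{2i})$ equal $(1,1)$—gives $\mathrm{w}_H(q) = 2^{m-1} - 2^{m-l-1} < 2^{m-1}$ for $l \ge 1$, whence $\mathrm{w}_H(q+1) > 2^{m-1}$; this pins down which of the latter two normal forms represents $f$. The only non-routine ingredient is the classification of alternating forms over $\mathbb{F}_2$; the rest of the argument is linear algebra plus the short weight computation.
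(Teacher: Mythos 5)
Your argument is correct and complete: the paper itself gives no proof of this proposition, citing it directly from MacWilliams and Sloane (Chapter 15), and your reduction to the symplectic normal form of the alternating polarization $B_f$, followed by the translation argument killing the linear part on $x_1,\dots,x_{2l}$ and the weight count $\mathrm{w}_H(q)=2^{m-1}-2^{m-l-1}$, is exactly the standard Dickson-type proof given there. Nothing to fix.
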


Denote by $V(f)=\{a\mid D_af \text{ is constant}\}$ the set of \emph{linear structures} of a \INS{Boolean function} $f$. Observe that $V(f)$ is a vector subspace.
\begin{proposition}[\INSm{\cite{CGC-cd-book-macwilliamsI}}]
Any quadratic \INS{Boolean function} $f$ is balanced if and only if its
restriction to $V(f)$ is not constant. If it is not balanced, then its weight equals $2^{m-1} \pm 2^{\frac{m+k}{2} -1}$
where $k$ is the dimension of $V(f)$.
\end{proposition}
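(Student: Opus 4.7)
The plan is to exploit the normal form classification in Proposition~\ref{prop:quad}, noting that both the dimension of the linear-structure space $V(f)$ and the weight of $f$ are invariants that transform predictably under affine equivalence. Concretely, if $f(x)=g(Ax+b)+c$ with $A$ invertible, then $V(f)=A^{-1}V(g)$ (so $\dim V(f)=\dim V(g)$), and the weight of $f$ is either $\mathrm{w}_H(g)$ or $2^m-\mathrm{w}_H(g)$; moreover the restriction $f|_{V(f)}$ is constant iff $g|_{V(g)}$ is. So it suffices to verify both assertions on each of the three canonical forms.

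For the non-balanced form $g(x)=x_1x_2+\cdots+x_{2l-1}x_{2l}$, I would first identify $V(g)$. A direct computation of $D_a g(x)$ shows that the coefficient of each monomial $x_i$ in $D_a g$ vanishes iff $a_1=\dots=a_{2l}=0$, so
\[
V(g)=\{a\in(\FF_2)^m : a_1=\cdots=a_{2l}=0\},
\]
which has dimension $k=m-2l$, and clearly $g|_{V(g)}\equiv 0$ is constant. To compute the weight I would factor the evaluation and use $\mathcal{F}(x_1x_2)=2$ on $(\FF_2)^2$, yielding $\mathcal{F}(g)=2^l$ on $(\FF_2)^{2l}$, and then multiply by the $2^{m-2l}$ free coordinates to get $\mathcal{F}(g)=2^{m-l}$. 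Therefore
\[
\mathrm{w}_H(g)=\tfrac{2^m-\mathcal{F}(g)}{2}=2^{m-1}-2^{m-l-1}=2^{m-1}-2^{(m+k)/2-1},
\]
using $l=(m-k)/2$. The form $g+1$ gives the complementary weight $2^{m-1}+2^{(m+k)/2-1}$, and in both cases the restriction to $V(f)$ is constant, matching the claim.

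For the balanced form $g(x)=x_1x_2+\cdots+x_{2l-1}x_{2l}+x_{2l+1}$, the same derivative computation (now accounting for the extra linear term) gives $V(g)=\{a:a_1=\cdots=a_{2l}=0\}$ (no restriction on $a_{2l+1},\dots,a_m$ since the $x_{2l+1}$ term contributes only the constant $a_{2l+1}$ to $D_a g$). On $V(g)$ we have $x_1=\cdots=x_{2l}=0$, so $g|_{V(g)}(x)=x_{2l+1}$, which is non-constant because $2l+1\le m$. This confirms the ``only if'' direction of the biconditional. For the ``if'' direction, the contrapositive already follows from the previous paragraph: if $f$ is not balanced it is affinely equivalent to one of the two constant-on-$V(f)$ forms.

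The computations are routine; the only point requiring care is the affine-invariance of the statement itself, specifically checking that replacing $f$ by $g(Ax+b)+c$ leaves ``$f|_{V(f)}$ constant'' invariant (the restriction gets replaced by an affine image of $g|_{V(g)}$ plus $c$, still constant iff the original is). Once this is in place, the three-case verification above closes the argument, and the weight formula with sign $\pm$ is read off directly.
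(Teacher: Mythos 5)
Your argument is correct, but note that the paper itself gives no proof of this proposition: it is quoted verbatim from MacWilliams--Sloane (Chapter 15) as a known fact. What you have written is a self-contained verification via the Dickson normal form of Proposition~\ref{prop:quad}, which is essentially the classical route taken in the cited source, so the content matches; the paper simply outsources it. All three case computations check out: $V(g)=\{a : a_1=\cdots=a_{2l}=0\}$ of dimension $k=m-2l$, the product formula $\mathcal{F}(g)=2^{m-l}=2^{(m+k)/2}$ giving weight $2^{m-1}\pm 2^{(m+k)/2-1}$, and the non-constancy of $x_{2l+1}$ on $V(g)$ in the balanced case. The one step you should make explicit is the transport of the constancy condition under $f(x)=g(Ax+b)+c$: the image of $V(f)$ under $x\mapsto Ax+b$ is the \emph{coset} $V(g)+b$, not $V(g)$ itself, so $f|_{V(f)}$ is not literally ``an affine image of $g|_{V(g)}$ plus $c$.'' The repair is one line: for $v\in V(g)$ the derivative $D_vg$ is constant, so $g(v+b)=g(b)+D_vg(b)=g(b)+D_vg(0)=g(v)+g(b)+g(0)$, whence $g$ restricted to $V(g)+b$ differs from $g$ restricted to $V(g)$ only by an additive constant and one is constant iff the other is. With that inserted, the affine-invariance of both the hypothesis and the conclusion is fully justified and the three-case check closes the proof.
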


\begin{remark}\label{rm:quadratic}
The proposition above implies that for any non-balanced quadratic \INS{Boolean function} we have $\mathcal{F}(f) =\pm 2^\frac{m+k}{2}$.
\end{remark}

\INS{We report now the definition of partially bent function, which was introduced in \cite{CGC2-cry-art-carlet1993partially}.}

\begin{definition}%[\cite{CGC2-cry-art-carlet1993partially}]
A \INS{Boolean function} $f$ is \emph{partially bent} if there exists a linear subspace $\bar{V}(f)$ of \INSm{$(\FF_2)^m$} such that the restriction of $f$ to $\bar{V}(f)$ is affine and the restriction of $f$ to any complementary subspace $U$ of $\bar{V}(f)$, $\bar{V}(f)\oplus U =\INSm{(\FF_2)^m}$, is bent.
\end{definition}

\begin{remark}\label{rk:lin}
If $f$ is partially bent, then $f$ can be represented as a direct sum of the restricted functions, i.e., $f(y + z) = f(y) + f(z)$, for all $z\in \bar{V}(f)$ and $y\in U$. Moreover we can deduce from Proposition \ref{prop:quad} that any quadratic function is partially bent.
\end{remark}

\begin{remark}\label{rk:U}
If $f$ is partially bent, the space $\bar{V}(f)$ is formed by the linear structures of $f$, which is $\bar{V}(f)=V(f)$. \INSm{Indeed, let $a\in\bar{V}(f)\setminus\{0\}$ and $x\in (\FF_2)^m$. Then $x=y+z$ for some $z\in \bar{V}(f)$ and $y\in U$. So, from Remark \ref{rk:lin}, we have
$$
D_af(x)=f(x+a)+f(x)=f(y+z+a)+f(y+z)=f(y)+f(z)+f(a)+f(y)+f(z)=f(a),
$$
that implies $\bar{V}(f)\subseteq V(f)$.\\
 Now, suppose that there exists $a\in{V}(f)\setminus\bar{V}(f)$. Without loss of generality $a\in U$. By definition $f_{|_U}$ (the restriction of $f$ to $U$) is bent. This implies $D_af_{|_U}$ is balanced, but this is not possible as $D_af(x)=f(a)$ for all $x\in (\FF_2)^m$. Then $\bar{V}(f)= V(f)$.\\
Moreover, since bent functions exist only in even dimension, $\mathrm{dim}(U)=m-\mathrm{dim}(V(f))$ is even. Which implies that if $m$ is even, the dimension of $V(f)$ is even. In particular $V(f)=\{0\}$ if and only if $f$ is bent. This implies that if $f$ is balanced and $m$ is even then $\mathrm{dim}( V(f))\ge 2$.} 
\end{remark}

\section{Properties of APN permutations}\label{sec:2}
%Here we report our first result concerning the components of an APN permutation over $\FF^m$, with $m$ even.

Our first result holds for any dimension and comes directly from the facts in previous section.
\begin{theorem}
Let $m\geq 3$ and let $F$ be an APN permutation of dimension $m$.
There are only two cases:
\begin{itemize}
\item $\hat{n}(F)=0$, which implies that for any $\gl\neq 0$ $F_\lambda$ is not partially bent and so $n_1(F)=n_2(F)=0$;
\item $\hat{n}(F)=1$, for which it is possible that there is a $\gl\not=0$ such that $F_\gl$ is partially bent, and so
                    $n_1(F)=0$ and $n_2(F)\geq 0$.
\end{itemize}
\end{theorem}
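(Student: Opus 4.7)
The plan is to combine three ingredients from the preliminaries: the bound $\hat{n}(F)\leq 1$ from Theorem \ref{th:nhat}, the vanishing $n_1(F)=0$ from Theorem \ref{th:linapn}, and the identification $\bar{V}(f)=V(f)$ recorded in Remark \ref{rk:U} for partially bent $f$. Since Theorem \ref{th:nhat} forces $\hat{n}(F)\in\{0,1\}$, the dichotomy of the statement is automatic, and the only substantive claim is that in the branch $\hat{n}(F)=0$ no nonzero component $F_{\gl}$ can be partially bent.

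I would first treat the branch $\hat{n}(F)=0$. Unpacking the definition, this hypothesis translates into $V(F_{\gl})=\{0\}$ for every nonzero $\gl$. Assuming for contradiction that some $F_{\gl}$ is partially bent, Remark \ref{rk:U} gives $\bar{V}(F_{\gl})=V(F_{\gl})=\{0\}$, so the only possible complementary subspace $U$ is all of $(\FF_2)^m$, forcing $F_{\gl}$ to be bent on the whole space. This is impossible when $m$ is odd (bent functions exist only in even dimension), and equally impossible when $m$ is even because $F$ being a permutation makes $F_{\gl}$ balanced whereas bent functions have weight $2^{m-1}\pm 2^{m/2-1}$. Either way we reach a contradiction, so no $F_{\gl}$ is partially bent. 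Proposition \ref{prop:quad} together with Remark \ref{rk:lin} then immediately gives $n_2(F)=0$, and $n_1(F)=0$ was already granted by Theorem \ref{th:linapn}.

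For the branch $\hat{n}(F)=1$ there is essentially nothing further to prove: the vanishing of $n_1(F)$ is again Theorem \ref{th:linapn}, and $n_2(F)\geq 0$ is trivial. The clause about the possibility of a partially bent component is a compatibility observation rather than an assertion, reflecting the fact that a single quadratic (hence partially bent by Remark \ref{rk:lin}) $F_{\gl}$ with $\dim V(F_{\gl})\geq 2$ would be consistent with $\hat{n}(F)=1$ but would have been ruled out in the previous branch. I do not anticipate any serious obstacle; the only delicate step is the appeal to Remark \ref{rk:U} in the first branch, which is precisely what lets one translate partially-bentness into a condition on the space of linear structures and thereby collide it with the balancedness of the components of a permutation.
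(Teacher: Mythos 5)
Your argument is correct and follows essentially the same route as the paper: both reduce the $\hat{n}(F)=0$ branch to the observation that $V(F_\lambda)=\{0\}$ forces a partially bent component to be bent (Remark \ref{rk:U}), which is incompatible with $F$ being a permutation, and both dispose of the $\hat{n}(F)=1$ branch as containing no substantive claim. Your explicit separation of the odd-$m$ case (where bent functions do not exist at all) is slightly more detailed than the paper's one-line dismissal, but it is the same proof.
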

\begin{proof}
From Theorem \ref{th:nhat} and Theorem \ref{th:linapn} we have that $\hat{n}(F)\le 1$ and $n_1(F)=0$. If $\hat{n}(F)=0$ then any non-zero component $F_\lambda$ has $V(F_\lambda)=\{0\}$. This implies that if $F_\lambda$ is partially bent, then $F_\lambda$ is bent, see Remark \ref{rk:U}, but it is not possible as $F$ is a permutation. 

If $\hat{n}(F)=1$ then there exists $F_\lambda$ with $V(F_\lambda)\ne\{0\}$, and it could be partially bent (in particular it could be quadratic). 
\end{proof}

The condition on bijection for $F$ is essential, otherwise the case $\hat n=0$ may have bent components. As there are examples of quadratic APN permutations for any odd dimension, the previous theorem cannot be improved. 

The case of an APN permutation $F$ with $m$ even is quite different and we will restrict to it from now on.\\
As shown in \cite{CGC2-cry-art-seberry1994pitfalls} there is no APN quadratic permutation over $\INSm{(\FF_2)^m}$ for $m$ even, that is $n_2(F)\le 2^{m-1}-1$. This result was extended by Nyberg \cite{CGC2-cry-art-nyberg1995s} showing that an APN permutation $F$ cannot have all components partially bent (for $m$ even). \\
Moreover, \INSm{ in \cite{GCG-cry-art-onweak} the authors give some properties on the the components of a weakly-APN permutation in even dimension. In particular they study partially-bent and quadratic components, obtaining that the number of the quadratic components of a weakly-APN permutation can be at most $2^{m-2}-1$ (\cite{GCG-cry-art-onweak} Proposition 4). As an APN function is weakly-APN, we have $n_2(F)\leq 2^{m-2}-1$.\\}
In the remainder of this section we will prove that \emph{no component} of $F$ is partially bent (quadratic).
\\

We start with the following proposition.

\begin{proposition}\label{prop:cost}
Let $F$ be an APN permutation over $\INSm{(\FF_2)^m}$, with $m$ even. If there are $a,\gl\in \INSm{(\FF_2)^m}\setminus\{0\}$ such that $D_aF_\gl$ is constant, then $D_aF_\gl=1$.
\end{proposition}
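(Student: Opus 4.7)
The plan is to assume for contradiction that the constant value is $0$ and derive an impossibility from the interaction of three facts: (a) $F$ is APN, so for every nonzero $a$ the map $D_aF$ is $2$-to-$1$ onto its image and hence $|\Im(D_aF)|=2^{m-1}$; (b) $F$ is a permutation, so $D_aF(x)=0$ would force $F(x+a)=F(x)$ and therefore $a=0$, which rules out $0\in\Im(D_aF)$; (c) the hypothesis $D_aF_\gl\equiv 0$ would mean $\gl\cdot D_aF(x)=0$ for every $x$, i.e.\ $\Im(D_aF)\subseteq \gl^\perp$.

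The key step is then to combine (b) and (c) to get $\Im(D_aF)\subseteq \gl^\perp\setminus\{0\}$, a set of cardinality $2^{m-1}-1$, and to confront this with the lower bound $|\Im(D_aF)|=2^{m-1}$ from (a). Since a constant Boolean function is either $0$ or $1$, ruling out $0$ forces $D_aF_\gl=1$. No even-dimension hypothesis is actually needed in this step; it simply sits inside the running assumption of the section.

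The only issue worth being careful about is the counting step in (a): I would justify it by the APN definition $\gd_F(a,b)\in\{0,2\}$ for $a\neq 0$, observing that $\sum_b \gd_F(a,b)=2^m$ forces exactly $2^{m-1}$ values of $b$ to be attained, each with multiplicity $2$. Alternatively one can phrase (a)--(c) as a single counting identity: the number of $x$ with $\gl\cdot D_aF(x)=0$ equals $2\cdot|\Im(D_aF)\cap \gl^\perp|\le 2\cdot(2^{m-1}-1)=2^m-2$, whereas the assumption $D_aF_\gl\equiv 0$ forces this count to be $2^m$.

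I do not expect any real obstacle; the statement is essentially a one-line consequence of APN-ness plus bijectivity. The main thing to get right is the book-keeping that $0$ is excluded from $\Im(D_aF)$ precisely because $F$ is a permutation, which is exactly the feature the proposition is designed to exploit before the more substantive results about partially bent components are proved in the sequel.
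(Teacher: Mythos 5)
Your proposal is correct and is essentially the paper's own argument: the paper likewise observes that $D_aF_\gl\equiv 0$ confines $\Im(D_aF)$ to a hyperplane (after a WLOG taking $F_\gl=f_1$), that APN-ness forces $|\Im(D_aF)|=2^{m-1}$, and that bijectivity of $F$ excludes $0$ from the image, yielding the same contradiction. Your version is marginally more explicit in justifying the $2$-to-$1$ counting and in noting that the even-dimension hypothesis is not actually used, but the route is the same.
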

\begin{proof}
Suppose that there exist non-zero $a,\gl\in \INSm{(\FF_2)^m}$ such that $D_aF_\gl=0$. Without loss of generality we can suppose that $F_\gl=f_1$, thus we have
$$
\Im(D_aF)=\{(0,y_2,\dots,y_{m})\mid y_i\in \FF_2\}.
$$
Being $F$ APN we have $|\Im(D_aF)|=2^{m-1}$, so $0$ has to lie in $\Im(D_aF)$, contradicting the fact that $F$ is a permutation.
\end{proof}

\begin{theorem}\label{th:part}
Let $F$ be an APN permutation over $\INSm{(\FF_2)^m}$, with $m$ even, then no non-zero component of $F$ is partially bent.
\end{theorem}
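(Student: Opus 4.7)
The plan is to argue by contradiction, assuming some non-zero component $F_\lambda$ is partially bent, and then forcing a violation of Proposition \ref{prop:cost}.

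First, I would collect the basic facts about the hypothetical partially bent component $F_\lambda$. Since $F$ is a permutation, every non-zero component $F_\lambda$ is balanced (for any fixed $\lambda\neq 0$, exactly half of the values of $\lambda\cdot F(x)$ equal $0$ because $F$ is a bijection and $\lambda\cdot y=0$ on half of $(\FF_2)^m$). Because $m$ is even and $F_\lambda$ is balanced and partially bent, the last part of Remark \ref{rk:U} gives $\dim V(F_\lambda)\ge 2$. Moreover, by Remark \ref{rk:U} we have $\bar V(F_\lambda)=V(F_\lambda)$, and by definition of partially bent function the restriction of $F_\lambda$ to $V(F_\lambda)$ is affine; since $F_\lambda(0)=0$, this restriction is in fact a linear form on $V(F_\lambda)$.

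Next I would use the derivatives. For any $a\in V(F_\lambda)\setminus\{0\}$, the derivative $D_aF_\lambda$ is constant by definition of linear structure; evaluating at $x=0$ shows the constant value is exactly $F_\lambda(a)$. Now I invoke Proposition \ref{prop:cost}: since $a\neq 0$ and $\lambda\neq 0$, this constant must equal $1$, so $F_\lambda(a)=1$ for every non-zero $a\in V(F_\lambda)$.

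Finally I would derive the contradiction. The restriction of $F_\lambda$ to $V(F_\lambda)$ is a linear form, so it is additive: $F_\lambda(a+b)=F_\lambda(a)+F_\lambda(b)$ for all $a,b\in V(F_\lambda)$. Since $\dim V(F_\lambda)\ge 2$, I can pick two $\FF_2$-linearly independent non-zero elements $a,b\in V(F_\lambda)$; then $a+b\in V(F_\lambda)\setminus\{0\}$, but by additivity $F_\lambda(a+b)=1+1=0$, contradicting the conclusion of the previous paragraph. Therefore no non-zero component of $F$ can be partially bent.

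There is no real obstacle here: the only subtlety is making sure that Proposition \ref{prop:cost} actually rules out the constant $0$ (not merely the non-constant case) and that the linear-form structure on $V(F_\lambda)$ is genuinely linear (not just affine), which follows from the normalization $F(0)=0$. Both of these are automatic, so the argument is short once the ingredients from Section \ref{sec:1} and Proposition \ref{prop:cost} are lined up.
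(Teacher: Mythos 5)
Your proposal is correct and follows essentially the same route as the paper: use Remark \ref{rk:U} to get $\dim V(F_\lambda)\ge 2$, apply Proposition \ref{prop:cost} to force $F_\lambda(a)=1$ for every non-zero linear structure $a$, and then contradict additivity on $V(F_\lambda)$ by taking the sum of two distinct non-zero elements. The only cosmetic difference is that you make explicit the balancedness of the components and the linearity (rather than mere affinity) of the restriction to $V(F_\lambda)$, points the paper leaves implicit.
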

\begin{proof}
Suppose that $F_\gl$ is partially bent, for some $\gl \in \INSm{(\FF_2)^m}\setminus\{0\}$. From Remark \ref{rk:U} and being $F$ a permutation, we have that the space of the linear structure of $F_\gl$ has at least dimension $2$. Let $a_1$ and $a_2$ be two distinct non-zero vectors of $V(F_\gl)$. Let $x\in\INSm{(\FF_2)^m}$, from Proposition \ref{prop:cost} and Remark \ref{rk:U} we have $D_{a_1}F_\gl(x)=F_\gl(a_1)=1$ and $D_{a_2}F_\gl(x)=F_\gl(a_2)=1$. This implies $a_1+a_2\in V(F_\gl)$, $a_1+a_2\neq 0$ and
$$
D_{a_1+a_2}F_\gl(x)=F_\gl(a_1+a_2)=F_\gl(a_1)+F_\gl(a_2)=0\quad \mbox{(for all }x\in\INSm{(\FF_2)^m}\mbox{)}.
$$
But $D_{a_1+a_2}F_\gl=0$ contradicts Proposition \ref{prop:cost}.
\end{proof}

\begin{corollary}\label{cor:quad}
Let $F$ be an APN permutation over $\INSm{(\FF_2)^m}$, for $m$ even. Then $n_2(F)=0$.
\end{corollary}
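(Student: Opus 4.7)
The plan is to derive the corollary as an immediate consequence of Theorem \ref{th:part} combined with Remark \ref{rk:lin}. The strategy is proof by contradiction: assume $n_2(F)\ge 1$, produce a quadratic non-zero component, and invoke the previously established results to reach a contradiction.

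More concretely, suppose toward a contradiction that $n_2(F)\neq 0$. By definition of $n_2$, this means there exists some non-zero $\gl\in(\FF_2)^m$ such that the component $F_\gl$ has degree $2$, i.e., $F_\gl$ is a quadratic Boolean function. Now I would invoke Remark \ref{rk:lin}, which explicitly records the (classical) fact that every quadratic Boolean function is partially bent. Applying this to $F_\gl$ yields a non-zero component of $F$ that is partially bent.

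But this directly contradicts Theorem \ref{th:part}, which asserts that for an APN permutation in even dimension $m$, no non-zero component can be partially bent. Hence the assumption $n_2(F)\neq 0$ is untenable and $n_2(F)=0$.

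There is no real obstacle here: all of the work has already been done in the proof of Theorem \ref{th:part} (where the key ingredient was Proposition \ref{prop:cost} together with the fact that the dimension of $V(F_\gl)$ is at least $2$ when $m$ is even and $F_\gl$ is balanced and partially bent). The corollary is just a specialization to the quadratic case, so the proof will be only a couple of lines long.
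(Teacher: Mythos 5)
Your proof is correct and matches the paper's intended argument exactly: the corollary is stated in the paper as an immediate consequence of Theorem \ref{th:part} together with the fact (recorded in Remark \ref{rk:lin}) that every quadratic Boolean function is partially bent. Nothing is missing.
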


\begin{remark}
As observed by C. Carlet and L. Budaghyan in a private communication, Theorem \ref{th:part} cannot be extended to the non-existence of plateaued components, since Dillon's APN permutation does have some.
\end{remark}

\INSm{\begin{remark}\label{rk:bent_lin}
Let $f$ be a bent Boolean function and $\ell$ be a linear Boolean function. Then $f+\ell$ is bent. Indeed, let $a\in (\FF_2)^m\setminus\{0\}$. Thus we have
$$
D_a(f+\ell)(x)=D_af(x)+D_a\ell(x)=D_af(x)+c,
$$
where $c\in\FF_2$. As $D_af$ is balanced, then $D_a(f+\ell)$ is balanced. This implies that $f+\ell$ is bent.\\
Moreover, we immediately have that if $f$ is partially bent and $\ell$ is a linear Boolean function, then $f+\ell$ is partially bent.
\end{remark}
\INSm{We recall that two functions $F$ and $F'$ are called EA-equivalent if there are an affine mapping $L$ and function $G$ affinely equivalent to $F$, such that $F'=G+L$}.\\
From Remark \ref{rk:bent_lin} we obtain the following.
\begin{proposition}\label{prop:quad1}
Let $F$ be a vectorial Boolean function with partially bent components. If $F'$ is EA-equivalent to $F$, then $F'$ has partially bent components.
\end{proposition}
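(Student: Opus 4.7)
The plan is to unpack the definition of EA-equivalence and verify that the property of being partially bent is preserved, componentwise, by each building block of the equivalence. Write $F'=G+L$ with $G(x)=A_1(F(A_2(x)))$ affinely equivalent to $F$, where $A_1(y)=M_1 y+c_1$ and $A_2(x)=M_2 x+c_2$ are affine bijections of $(\FF_2)^m$ and $L$ is an affine vectorial function. For a non-zero $\gl\in(\FF_2)^m$ I would expand
\[
F'_\gl(x)=\gl\cdot G(x)+\gl\cdot L(x)=F_{M_1^\top\gl}\bigl(A_2(x)\bigr)+\bigl(\gl\cdot c_1+\gl\cdot L(x)\bigr),
\]
where the quantity in the second pair of parentheses is an affine Boolean function $\ell_\gl(x)$. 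Setting $\mu=M_1^\top\gl$, the correspondence $\gl\leftrightarrow\mu$ is a bijection on $(\FF_2)^m\setminus\{0\}$, and
\[
F'_\gl=F_\mu\circ A_2+\ell_\gl.
\]
So it suffices to prove that ``partially bent'' is preserved under (i) composition with an affine bijection on the input, and (ii) addition of an affine Boolean function.

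For (ii), the case of a linear summand is precisely Remark~\ref{rk:bent_lin}, and adding a constant leaves all derivatives unchanged, hence preserves $V(f)$ and the restrictions appearing in the definition, so (ii) follows at once. For (i), I would invoke the equivalent characterization that a Boolean function $f$ is partially bent iff every non-zero derivative $D_a f$ is either balanced or constant. Given this, writing $g=f\circ A_2$,
\[
D_a g(x)=f\bigl(M_2(x+a)+c_2\bigr)-f\bigl(M_2 x+c_2\bigr)=(D_{M_2 a}f)(M_2 x+c_2).
\]
Since $x\mapsto M_2 x+c_2$ is a bijection, $D_a g$ is balanced (respectively, constant) iff $D_{M_2 a}f$ is; and since $M_2$ is invertible, as $a$ ranges over non-zero vectors so does $M_2 a$. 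Hence $g$ is partially bent iff $f$ is. Combining (i) and (ii), whenever $F_\mu$ is partially bent so is $F'_\gl=F_\mu\circ A_2+\ell_\gl$, and through the bijection $\gl\leftrightarrow\mu$ every partially bent component of $F$ produces a partially bent component of $F'$.

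The main obstacle is justifying the derivative-level characterization of partially bentness used in (i), which is not stated explicitly in the paper. The forward direction follows by splitting a direction $a=a_U+a_{\bar V}$ along the direct-sum decomposition in the definition: the contribution $D_{a_U}(f|_U)$ is balanced because $f|_U$ is bent, and $D_{a_{\bar V}}(f|_{\bar V})$ is constant because $f|_{\bar V}$ is affine, so their sum is balanced whenever $a_U\neq 0$ and constant otherwise. The converse is obtained by setting $\bar V(f)=V(f)$ (the identification already carried out in Remark~\ref{rk:U}), choosing any complement $U$, and verifying bentness of $f|_U$ from the fact that the derivatives of $f|_U$ on $U$ are balanced. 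Once this equivalent characterization is in hand, the remainder of the argument is a routine computation of how components transform under affine compositions.
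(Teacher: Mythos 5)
Your proof is correct and follows essentially the same route as the paper: write $F'=G+L$ with $G$ affinely equivalent to $F$, note that each component of $F'$ is a component of $G$ plus an affine Boolean function, and dispose of the affine summand via Remark~\ref{rk:bent_lin}. The only real difference is that the paper asserts in passing that affine equivalence preserves partially bent components, whereas you prove it, using the characterization of partially bent functions as those whose nonzero derivatives are all balanced or constant; that characterization is the standard one from Carlet's original paper on partially-bent functions, and your sketch of its converse is sound provided you add the observation that for $a\notin V(f)$ the derivative $D_af$ is constant on cosets of $V(f)$ (since $f(x+z)=f(x)+f(z)+f(0)$ for $z\in V(f)$), so balancedness of $D_af$ on all of $(\FF_2)^m$ does transfer to balancedness of $D_a(f|_U)$ on a complement $U$.
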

\begin{proof}
$F'=G+L$ for some $G$ affine equivalent to $F$ and $L$ affine map of $(\FF_2)^m$. Thus the component $\gl$ of $F'$ is $F'_\gl=G_\gl+L_\gl$ for all $\gl\in (\FF_2)^m$. As $F$ has a  partially bent component, then also $G$ has a partially bent component (it is affine equivalent to $F$). Let $G_\gl$ be partially bent, then from Remark \ref{rk:bent_lin} we have that $G_\gl+L_\gl$ is partially bent.
\end{proof}}

\INSm{For $m$ even and $\gcd(m,i)=1$ the following two families of APN functions were constructed in \cite{lilya1}, \cite{lilya2}
$$x^{2^i+1}+(x^{2^i}+x+1)\mathrm{Tr}(x^{2^i+1}),$$
 $$x^3+\mathrm{Tr}(x^9)+(x^2+x+1)\mathrm{Tr}(x^3),$$
where  $\mathrm{Tr}(x)$ denotes the trace function from $\FF_{2^m}$ into $\FF_2$.
It was proven in \cite{li2013nonexistence} that the first one is not EA-equivalent to permutations and, at the best of our knowledge, a similar result is missing for the second one.
However, both functions have quadratic components which implies, according to Corollary \ref{cor:quad} and Proposition \ref{prop:quad1}, that both of them are not EA-equivalent to permutations.}\\
\INSm{More generally, since EA-equivalence preserves partially bent components, then the following holds:
\begin{corollary}\label{cor:ea}
Let $m$ be even and $F$ be an APN function over $(\FF_2)^m$ having a partially bent (quadratic) component. Then $F$ is EA-inequivalent to any permutation.
\end{corollary}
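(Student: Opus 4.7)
The plan is to derive the corollary as an immediate contrapositive application of Theorem \ref{th:part} (equivalently, Corollary \ref{cor:quad}) combined with Proposition \ref{prop:quad1}. The key observation is that EA-equivalence preserves two properties simultaneously: the APN property (this is standard and used implicitly throughout the paper, since CCZ- and hence EA-equivalence preserves differential uniformity), and the existence of a partially bent component (this is precisely Proposition \ref{prop:quad1}).

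So I would argue by contradiction. Suppose $F$ is APN, has a partially bent component, and is EA-equivalent to some permutation $F'$. Then $F'$ is an APN permutation in even dimension $m$, and by Proposition \ref{prop:quad1} it inherits a partially bent component from $F$. But Theorem \ref{th:part} forbids any non-zero component of an APN permutation in even dimension from being partially bent, which is the desired contradiction. The parenthetical ``quadratic'' case is then immediate from Remark \ref{rk:lin}, which states that every quadratic function is partially bent.

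There is essentially no hard step here; the whole content of the corollary has already been done in Theorem \ref{th:part} and Proposition \ref{prop:quad1}. The only small point that deserves explicit mention in the write-up is that EA-equivalence preserves the APN property, so that the hypothetical permutation $F'$ to which $F$ is EA-equivalent is itself APN and Theorem \ref{th:part} may be applied to it. Beyond that, the proof is a one-line concatenation of already-proved facts.
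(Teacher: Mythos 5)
Your proposal is correct and matches the paper's (implicit) argument exactly: the corollary is stated as an immediate consequence of Proposition \ref{prop:quad1} (EA-equivalence preserves partially bent components) together with Theorem \ref{th:part}/Corollary \ref{cor:quad}, using that EA-equivalence preserves the APN property. Your explicit remark that the hypothetical permutation inherits APN-ness is a welcome clarification of a step the paper leaves tacit.
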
}

%{ \color{red}For $n$ even and $\gcd(n,i)=1$ the following two families of APN functions were constructed in \cite{CGC-cry-art-budaghyan2006new}
%\begin{equation}\label{eq:apn1}
%x^{2^i+1}+(x^{2^i}+x+1)\mathrm{Tr}(x^{2^i+1}),
%\end{equation}
%and in \cite{CGC-cry-art-budaghyan2009constructing}
%\begin{equation}\label{eq:apn2}
%x^3+\mathrm{Tr}(x^9)+(x^2+x+1)\mathrm{Tr}(x^3),
%\end{equation}
%where $\mathrm{Tr}$ is the trace function from $\FF_{2^n}$ to $\FF_2$.
%
%We recall that two vBf's, $f$ and $f'$, are called EA-equivalent (Extended Affine equivalent) if there exist three affine functions $g$, $g'$ and $g''$ such that $f'=g'\circ f\circ g+g''$.
%It was proven in \cite{CGC-cry-art-li2013nonexistence} that the function in Equation \eqref{eq:apn1} is not EA-equivalent to permutations and, at the best of our knowledge, a similar result is not known for the that in Equation \eqref{eq:apn2}.
%However, both functions have quadratic component functions which implies, according to Corollary \ref{cor:quad}, that both of them are not EA-equivalent to permutations.
%
%More generally, since EA-equivalence preserves algebraic degrees, then the following holds:
%\begin{corollary}
%Let $n$ be even and $F$ be an APN function over $\FF_{2^n}$ having a quadratic component. Then $F$ is EA-inequivalent to any permutation.
%\end{corollary}
%}
\section{On cubics in even dimension}\label{sec:3}

In this section we are interested in cubic \INS{Boolean functions} and cubic \INS{vectorial Boolean functions}.\\

Given a \INS{Boolean function} $f$, we are interested in counting the number of its derivatives that are balanced, that is the cardinality of 
$\Gamma(f)=\{a\in \INSm{(\FF_2)^m}\mid D_af \text{ is balanced}\}$. Observe that if $f$ is quadratic, then its derivatives
can be either linear functions (which are balanced) or constant functions, which are non-balanced.
For a cubic function, the situation is less obvious.
The following theorem presents an estimate of $\Gamma$ for a component of a pure-cubic APN function.
\INS{\begin{theorem}\label{prop:gamma}
Let $F$ be a cubic APN \INS{vectorial Boolean function} over $\INSm{(\FF_2)^m}$, with $m$ even.
Then there exists $\lambda\neq{0}$ such that $$|\Gamma(F_\lambda)|\geq2^m-2^{m-2}-1.$$
\end{theorem}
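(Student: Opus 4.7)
The plan is a pigeonhole/averaging argument on the sum $\sum_{a\ne 0,\lambda\ne 0}\mathcal F^2(D_aF_\lambda)$, using the APN identity of Proposition \ref{th:F-APN} to compute the total and a sharpened lower bound on each non-balanced term to turn this into a bound on how many non-balanced derivatives can occur. For $a\ne 0$ Proposition \ref{th:F-APN} gives $\sum_\lambda \mathcal F^2(D_aF_\lambda)=2^{2m+1}$; isolating $\lambda=0$, which always contributes $\mathcal F^2(0)=2^{2m}$, and summing over $a\ne 0$ yields
\[
\sum_{a\ne 0,\ \lambda\ne 0}\mathcal F^2(D_aF_\lambda)=(2^m-1)\cdot 2^{2m}.
\]
Writing $M_\lambda=|\{a\ne 0:D_aF_\lambda\text{ is not balanced}\}|$ and noting that $D_0F_\lambda\equiv 0$ is not balanced, one has $|\Gamma(F_\lambda)|=2^m-1-M_\lambda$, so the statement reduces to exhibiting some $\lambda^\ast\ne 0$ with $M_{\lambda^\ast}\le 2^{m-2}$.

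The essential step is the sharpened lower bound
\[
\mathcal F^2(D_aF_\lambda)\ge 2^{m+2}\qquad\text{whenever }a,\lambda\ne 0\ \text{and}\ D_aF_\lambda\ \text{is non-balanced.}
\]
Since $F$ is cubic, $D_aF_\lambda$ has degree $\le 2$. The degree $\le 1$ non-balanced case forces a non-zero constant, for which $\mathcal F^2=2^{2m}\ge 2^{m+2}$ trivially. In the genuinely quadratic case, Remark \ref{rk:lin} makes $D_aF_\lambda$ partially bent, so Remark \ref{rm:quadratic} gives $\mathcal F^2(D_aF_\lambda)=2^{m+\dim V(D_aF_\lambda)}$. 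Now the universal identity $D_a(D_aF_\lambda)\equiv 0$ places $a$ inside $V(D_aF_\lambda)$, hence $\dim V(D_aF_\lambda)\ge 1$; Remark \ref{rk:U} then upgrades this to $\dim V(D_aF_\lambda)\ge 2$ by combining partial bentness with the even-$m$ parity constraint on the dimension of the linear-structure space.

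Plugging the sharpened bound into the total count gives $2^{m+2}\sum_{\lambda\ne 0}M_\lambda\le(2^m-1)\cdot 2^{2m}$, i.e.\ $\sum_{\lambda\ne 0}M_\lambda\le(2^m-1)\cdot 2^{m-2}$; pigeonhole then picks some $\lambda^\ast\ne 0$ with $M_{\lambda^\ast}\le 2^{m-2}$, and therefore $|\Gamma(F_{\lambda^\ast})|\ge 2^m-2^{m-2}-1$. I expect the main obstacle to be precisely the factor-of-$4$ improvement over the naive bound $\mathcal F^2\ge 2^m$: without it, the averaging only yields $M_{\lambda^\ast}\le 2^m$, which is far too weak. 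Ruling out bent derivatives relies on the subtle interplay between the automatic linear structure $a\in V(D_aF_\lambda)$ and the parity forced by Remark \ref{rk:U}, which is the single point where the even-dimension hypothesis enters the argument.
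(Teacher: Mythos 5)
Your proposal is correct and follows essentially the same route as the paper: both rest on the APN identity $\sum_{\lambda\neq 0}\mathcal F^2(D_aF_\lambda)=2^{2m}$, the sharpened bound $\mathcal F^2(D_aF_\lambda)\geq 2^{m+2}$ for non-balanced derivatives (via $a\in V(D_aF_\lambda)$ and the even-dimension parity of $\dim V$ from Remark \ref{rk:U}), and a final double-counting/pigeonhole over $\lambda$. The only cosmetic differences are that the paper first bounds $|\Delta_a|\leq 2^{m-2}$ for each fixed $a$ before summing, and that your phrase ``non-zero constant'' in the degree-$\leq 1$ case should just read ``constant'' (the value $0$ can occur for non-permutations), which does not affect the bound $\mathcal F^2=2^{2m}$.
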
}
\begin{proof}
Consider any $a\in\mathbb{F}^m$, $a\neq 0$. For any component $F_\gl$ we are interested in the following integers: $\deg (D_aF_\gl)$, where clearly $0\leq\deg (D_aF_\gl)\leq 2$, $\mathcal{F}(D_aF_\gl)$, and $k_\lambda=\dim(V(D_aF_\gl))$. We note that $k_\lambda\geq 2$, because $D_a(D_aF_\gl)=0$ and so $D_aF_\gl$ cannot be bent (see Remark \ref{rk:U}). We can have the following cases:

\begin{itemize}
\item[1)] $\deg (D_aF_\gl)=0$, then we have $D_aF_\gl =0,1$ and so $\mathcal{F}(D_aF_\gl)=\pm2^m=\pm2^{\frac{m+m}{2}}$.
\item[2)] $\deg (D_aF_\gl)=1$, then $D_aF_\gl$ is balanced and so $\mathcal{F}(D_aF_\gl)=0$.
\item[3)] $\deg (D_aF_\gl)=2$ and $D_aF_\gl$ is balanced, so $\mathcal{F}(D_aF_\gl)=0$.
\item[4)] $\deg (D_aF_\gl)=2$ and $D_aF_\gl$ is not balanced, so from Remark  \ref{rm:quadratic} we have $\mathcal{F}(D_aF_\gl) =\pm 2^\frac{m+k_\gl}{2}$.
\end{itemize}
Let $s_\gl = \frac{m+k_\gl}{2}$, noting that $k_\gl\ge 2$, then $\frac m 2 +1 \le s_\gl\le m$. Observe that in case 1 and 4 we have $\mathcal{F}^2(D_aF_\gl)=2^{2s_\lambda}$, while in
case 2 and 3 we have $\mathcal{F}^2(D_aF_\gl)=0$. Since $a\neq0$ and $\mathcal{F}^2(D_aF_0)=2^{2m}$, from  Proposition \ref{th:F-APN}, we have
 $$\sum_{\lambda\neq0}\mathcal{F}^2(D_aF_\gl)=2^{2m}.$$
 
 We will now consider functions $D_aF_\lambda$, with $\lambda$ and $a$ varying freely in $\INSm{(\FF_2)^m}\setminus\{0\}$.
 
Now let $\Delta_a = \lbrace \lambda\neq{0}\mid\mathcal{F}(D_aF_\gl)\neq0 \rbrace$ and $\overline{\Delta_a} = \lbrace \lambda\neq{0}\mid\mathcal{F}(D_aF_\gl)=0 \rbrace$, i.e. $\Delta_a\cup\overline{\Delta_a}=\INSm{(\FF_2)^m}\setminus\{0\}$.
Hence in our case we have 
$$2^{2m}=\sum_{\lambda\neq {0}}\mathcal{F}^2(D_aF_\gl)=\sum_{\lambda\in\Delta_a}\mathcal{F}^2(D_aF_\gl)=\sum_{\lambda\in\Delta_a}2^{2s_\lambda}.$$

Since $s_\gl\ge \frac m 2 +1$, we have
 $$2^{2m}\geq \sum_{\lambda\in\Delta_a}2^{2(\frac{m}{2}+1)}=2^{m+2}|\Delta_a|,\quad\text{ and so }
|\Delta_a|\leq 2^{m-2}.$$
We have thus proved that 
 $$|\Delta_a|\leq2^{m-2} \mathrm{\ and\ } |\overline{\Delta_a}|\geq 2^m-2^{m-2}-1.$$
Note that $\Gamma(F_\lambda) = \lbrace a \mid \mathcal{F}(D_aF_\gl)=0\rbrace$.
Assuming now that for all $\lambda\neq{0}$ we have $|\Gamma(F_\lambda)|<2^m-2^{m-2}-1$, we would have
$$(2^m-1)\cdot(2^m-2^{m-2}-1)\leq\sum_{a\neq {0}}|\overline{\Delta_a}|=\sum_{\lambda\neq {0}}|\Gamma(F_\lambda|)<(2^m-1)\cdot(2^m-2^{m-2}-1),$$
and this is impossible. Thus there exists $\lambda\neq{0}$ such that $|\Gamma(F_\lambda)|\geq2^m-2^{m-2}-1$.
\end{proof}

Observe that the previous theorem holds, when $m$ is even,  for any cubic APN function $F$ such that 
$F:\INSm{(\FF_2)^m} \to \INSm{(\FF_2)^m}$, even if $F$ is not a permutation (and even if $F$ is a pure-cubic). 
\\

It turns out that $\Gamma(f)$ suffers a strong constraint when we specialize to the case $m=4$,
as next theorem shows. 

\begin{theorem}\label{lm:gamma}
Let $f:(\FF_2)^4\rightarrow \FF_2$ be a cubic \INS{Boolean function}. Then $$|\Gamma(f)|=|\lbrace a \mid \mathcal{F}(D_af)=0\rbrace|< 11.$$
\end{theorem}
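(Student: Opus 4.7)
The plan is to pick a normal form for $f$ via affine equivalence and then compute $\mathcal{F}(D_a f)$ by singling out the variable $x_4$. Since $|\Gamma(f)|$ is invariant under affine equivalence, and $GL_4(\FF_2)$ acts transitively on the nonzero elements of $RM(3,4)/RM(2,4)$, I assume the cubic part of $f$ is $x_1 x_2 x_3$, so $f = x_1 x_2 x_3 + q$ with $\deg(q)\le 2$. Splitting $q$ by its $x_4$-content, $q = q_0(x_1,x_2,x_3) + x_4\,q_1(x_1,x_2,x_3)$, and $\deg(q)\le 2$ forces $q_1$ to be affine, so write $q_1 = \ell + c$. Then
\[
f(x) = g(x_1,x_2,x_3) + x_4\,h(x_1,x_2,x_3),
\]
where $g := x_1 x_2 x_3 + q_0$ is cubic on $(\FF_2)^3$ and $h := \ell + c$ is affine.

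A direct expansion for $a = (a',a_4)$ gives
\[
D_a f(x) = B(x_1,x_2,x_3) + \ell(a')\,x_4 + (\text{const}),\qquad B := D_{a'} g + a_4\,h.
\]
Carrying out the $x_4$-sum in $\mathcal{F}(D_a f)$ first, the inner sum vanishes whenever $\ell(a') = 1$; otherwise $\mathcal{F}(D_a f) = \pm 2\,\mathcal{F}_3(B)$, where $\mathcal{F}_3$ is the analogous sum on $(\FF_2)^3$. Hence $a \in \Gamma(f)$ iff either $\ell(a')=1$, or $\ell(a')=0$ and $B$ is balanced on $(\FF_2)^3$.

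The heart of the argument is the universal identity $D_a\varphi\bigl|_{\langle a\rangle}\equiv\varphi(a)$, valid for any Boolean $\varphi$ with $\varphi(0)=0$ and $a\neq 0$ (since $D_a\varphi(0)=\varphi(a)=\varphi(2a)+\varphi(a)=D_a\varphi(a)$). Now $D_{a'}g$ is quadratic with homogeneous-quadratic part $a'_1 x_2 x_3 + a'_2 x_1 x_3 + a'_3 x_1 x_2$ (independent of $q_0$), whose radical is exactly $\langle a'\rangle$, so $V(D_{a'}g) = \langle a'\rangle$; the universal identity then makes $D_{a'}g\bigl|_{V(D_{a'}g)}$ constant, and the balanced-quadratic criterion (the Proposition preceding Remark~\ref{rm:quadratic}) rules out balance. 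This settles the case $\ell = 0$: no $a$ lies in $\Gamma(f)$, whence $|\Gamma(f)| = 0$. When $\ell \neq 0$, the $8$ vectors with $\ell(a')=1$ lie in $\Gamma(f)$, the vector $a = e_4$ contributes one more ($B = h$ is then non-constant affine), and for the remaining $6$ vectors with $a' \in \ker\ell\setminus\{0\}$ the \emph{same} radical/restriction argument applied to $B = D_{a'}g + a_4\ell$ shows $B$ is constant on $\langle a'\rangle$ (using $\ell(a')=0$), ruling out balance. Altogether $|\Gamma(f)| \le 9 < 11$.

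The main obstacle is spotting the correct structural reason why the balance criterion always fails: not that the linear-structure space $V$ is small, but that $V = \langle a'\rangle$ combined with the universal restriction identity forces the restriction to be constant. Once this pattern is identified, the case split $\ell = 0$ vs.\ $\ell \neq 0$ is uniform and immediate.
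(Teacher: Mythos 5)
Your proof is correct, and it takes a genuinely different route from the paper's. The paper argues by contradiction: from $|\Gamma(f)|\ge 11$ it extracts four linearly independent balanced directions, normalizes them to the standard basis, and then runs four separate cases according to the number of cubic monomials, in each case deriving linear constraints on the quadratic coefficients $\beta_{ij}$ from the balancedness of the $D_{e_i}f$ and exhibiting enough further directions with unbalanced derivatives to reach a contradiction. You instead normalize the cubic part once and for all to $x_1x_2x_3$, using that $\GL_4(\FF_2)$ acts transitively on the $15$ nonzero homogeneous cubic parts modulo quadratics; this is true (in $4$ variables the degree-$3$ classes form a copy of the standard module, and one can check directly that each of the four weight types reduces to a single triple product of independent linear forms), but it is the one step you should justify explicitly. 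After that your argument is uniform: decoupling $x_4$ reduces everything to a quadratic $B=D_{a'}g+a_4\ell$ on $(\FF_2)^3$; for $a'\ne 0$ the associated alternating form $a_3x_1x_2+a_2x_1x_3+a_1x_2x_3$ is nonzero of even rank, hence has one-dimensional radical $\langle a'\rangle$, and the identity $D_{a'}g(a')=D_{a'}g(0)$ makes the restriction of $B$ to $V(B)$ constant exactly when $a_4\ell(a')=0$, so the balanced-quadratic criterion (the Proposition preceding Remark~\ref{rm:quadratic}) excludes balancedness in all the relevant cases. This yields the sharper conclusion $|\Gamma(f)|\in\{0,9\}$ ($0$ when $\ell=0$, i.e.\ when $f$ is equivalent to a function of three variables, and exactly $8+1=9$ otherwise), which is stronger than the stated bound $|\Gamma(f)|<11$ and still combines with Theorem~\ref{prop:gamma} to give Theorem~\ref{th:cubic}. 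In exchange for the normalization lemma, your approach eliminates the paper's case analysis entirely and determines $\Gamma(f)$ explicitly rather than merely bounding it.
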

\begin{proof}
Suppose that $|\Gamma(f)|\ge 11$. Since $|\Gamma(f)|>8$, $\Gamma(f) $ contains 4 linearly independent vectors. Without loss of generality we can assume $e_1, e_2, e_3, e_4$ (the standard basis) belong to $\Gamma(f)$. We will implicitly use in the remainder of this proof that $D_{e_1}(f)$,..., $D_{e_4}(f)$ are balanced for our $f$. 

The \INS{Boolean function} $f$ can be written as $f=\mathrm{supp}_0(f)+\mathrm{supp}_1(f)+\mathrm{supp}_2(f)+\mathrm{supp}_3(f)$, where $\mathrm{supp}_i(f)$ contains only terms of
degree $i$ for $i=1,2,3$.  Note that $\mathrm{supp}_0(f)+\mathrm{supp}_1(f)$, that is, the linear part does not influence the balancedness of any derivative of $f$, and so we can consider $f=\mathrm{supp}_2(f)+\mathrm{supp}_3(f)$.
We will find a contradiction depending on the following cases, which are characterized by the weight of $\mathrm{supp}_3(f)$, i.e. $|\mathrm{supp}_3(f)|$, which can only vary in $1 \leq |\mathrm{supp}_3(f)|\leq 4$ since $m=4$ and $\deg(f)=3$.\\
In the following cases we will write \INS{Boolean functions} as polynomials that may contain squares, using the standard notation
of viewing them implicitly in the quotient ring $\FF_2[x_1,x_2,x_3,x_4]/\langle x_1^2-x_1,\dots,x_4^2-x_4\rangle$.
Also, since we will apply affine transformations we will sometimes obtain a \INS{Boolean function} $g$ such that $g(0)=1$.
However, we will write that $g$ is \emph{equivalent} to $g'$ if $g$ is affine equivalent to either $g'$ or
$g'+1$. This equivalence notion preserves balancedness and it is appropriate for our goals.

\begin{itemize}
\item[\bf i)] {\bf $|\mathrm{supp}_3(f)|=4$}

$$
f(x)=x_1x_2x_3+x_1x_2x_4+x_1x_3x_4+x_2x_3x_4+\sum_{i<j}\gb_{ij}x_ix_j.
$$
We have already assumed that the standard basis belongs to $\Gamma(f)$, so $\Gamma(f)$ contains all vectors of weight 1. There are other 7 vectors in $\Gamma(f)$ and in $(\FF_2)^4$ there are   five vectors of weight 3 or 4. Thus in $\Gamma(f)$ there is at least a vector of weight 2. A permutation  of the coordinates will not change our situation, that is, $\Gamma$ will still contain the vectors with the prescribed weights, and so without loss of generality we can assume that $(1100)\in\Gamma$.
If we derive $f$ in the direction of $e_1$ we obtain
\begin{align*}
D_{e_1}f(x)  = & x_2x_3+x_2x_4+x_3x_4+\sum_{j\ne1}\beta_{1j}x_j\\
 = & (x_2+x_3+\beta_{14})(x_3+x_4+\beta_{12})+(1+\beta_{12}+\beta_{13}+\beta_{14})x_3+\beta_{12}\beta_{14},
\end{align*}
that is equivalent to $x_1  x_2+(1+\beta_{12}+\beta_{13}+\beta_{14})x_3$. From Proposition \ref{prop:quad} we have that $D_{e_1}f$ is balanced if and only if $\beta_{12}+\beta_{13}+\beta_{14}=0$.

Similarly from the derivative $D_{e_2}f$ we get that $\beta_{12}+\beta_{23}+\beta_{24}=0$.\\
Now, if we derive the function in the direction of $a=(1100)$ we obtain
\begin{align*}
D_af(x) = &  x_1x_3+x_2x_3+x_3+x_1x_4+x_2x_4+x_4+x_3x_4+x_3x_4\\
& \ +\beta_{12}(x_1+x_2+1)+\beta_{13}x_3+\beta_{14}x_4+\beta_{23}x_3+\beta_{24}x_4\\
 = &  (x_1+x_2+1+\beta_{13}+\beta_{23})(x_3+x_4+\beta_{12})\\
 & \  +x_4(\beta_{13}+\beta_{23}+\beta_{14}+\beta_{24})+\beta_{12}(\beta_{13}+\beta_{23}),
\end{align*}
which is equivalent to $x_1  x_2+(\beta_{13}+\beta_{23}+\beta_{14}+\beta_{24})x_3$.
 The obtained \INS{Boolean function} is balanced if and only if $\beta_{13}+\beta_{23}+\beta_{14}+\beta_{24}=1$, which contradicts the previous conditions, since
 $$
 0+0=\beta_{12}+\beta_{13}+\beta_{14}+\beta_{12}+\beta_{23}+\beta_{24}=\beta_{13}+\beta_{23}+\beta_{14}+\beta_{24}.
 $$ 
\item[\bf ii)] {\bf$|\mathrm{supp}_3(f)|=3$}

Without loss of generality we can consider 
$$
f(x)=x_1x_2x_3+x_1x_2x_4+x_1x_3x_4+\sum_{i<j}\gb_{ij}x_ix_j.
$$

Considering the derivative in the direction of $e_2$ we have
\begin{align*}
D_{e_2}f(x) = & x_1x_3+x_1x_4+\sum_{j\ne2}\beta_{2j}x_j\\
= & (x_1+\beta_{23})(x_3+x_4+\beta_{12})+(\beta_{23}+\beta_{24})x_4+\beta_{12}\beta_{23}.
\end{align*}
As before, from Proposition \ref{prop:quad} we have $\beta_{23}+\beta_{24}=1$.
Similarly, we have $\beta_{23}+\beta_{34}=1$ from $D_{e_3}f$, and $\beta_{24}+\beta_{34}=1$ from $D_{e_4}f$. The three conditions on $\gb_{ij}$'s cannot simultaneously hold. 

\item[\bf iii)] {\bf $|\mathrm{supp}_3(f)|=2$}

In this case we can assume without loss of generality that $f$ is given by
$$f(x) = x_1x_2x_3+x_1x_2x_4+\sum_{i<j}\beta_{ij}x_ix_j.$$
We will consider first its derivatives in the direction of the $e_1,e_2,e_3$ and, as usual, we will determine the conditions that preserve the linear part (Proposition \ref{prop:quad} first case).
\begin{align*}
D_{e_1}f(x) = & x_2x_3+x_2x_4+\sum_{j\ne1}\beta_{1j}x_j\\
 = & (x_2+\beta_{13})(x_3+x_4+\beta_{12})+(\beta_{13}+\beta_{14})x_4+\beta_{12}\beta_{13},
\end{align*}
 then $\beta_{13}+\beta_{14}=1$.\\
Similarly we have $\beta_{23}+\beta_{24}=1$ from $D_{e_2}f$.\\
Deriving $f$ in the direction of $e_3$ we have
\begin{align*}
D_{e_3}f(x) = & x_1x_2+\sum_j\beta_{3j}x_j\\
= & (x_1+\beta_{23})(x_2+\beta_{13})+\beta_{34}x_4+\beta_{13}\beta_{23}.
\end{align*}
 Then $\beta_{34}=1$.\\
 
 The collected conditions are 
 $$
 \gb_{14}=\gb_{13}+1,\,\gb_{24}=\gb_{23}+1,\,\gb_{34}=1.
 $$
Let  $a\in\mathbb{F}^m$ , $a = (\gamma,\gamma+1,{\xi},{\xi}+1)$ with $\gamma,{\xi}\in \FF_2$.
Consider the derivative in the direction of $a$.
\begin{align*}
D_{a}f(x) = &   
({\xi} x_1x_2+(\gamma+1)x_1x_3+(\gamma+1){\xi} x_1+\gamma x_2x_3+\gamma{\xi} x_2+\gamma(\gamma+1)x_3+\gamma(\gamma+1){\xi})\\
 & +(({\xi}+1)x_1x_2+(\gamma+1)x_1x_4+(\gamma+1)({\xi}+1)x_1\\
 &+\gamma x_2x_4+\gamma({\xi}+1)x_2+\gamma(\gamma+1)x_4+\gamma(\gamma+1)({\xi}+1))\\
&+\beta_{12}((\gamma+1)x_1+\gamma x_2+\gamma(\gamma+1))+\beta_{13}( {\xi} x_1+\gamma x_3+\gamma{\xi})\\
 &+(1+\beta_{13})(({\xi}+1)x_1+\gamma x_4+\gamma({\xi}+1))+\beta_{23}({\xi} x_2+(\gamma+1)x_3+(\gamma+1){\xi})\\
 &+(1+\beta_{23})(({\xi}+1)x_2+(\gamma+1)x_4+(\gamma+1)({\xi}+1))\\
 &+(({\xi}+1)x_3+{\xi} x_4+{\xi}({\xi}+1))\\
 = &
\left((\gamma+1)x_1+\gamma x_2+\gamma\beta_{13}+(\gamma+1)\beta_{23}+{\xi}+1)\right)\cdot \\
& \cdot  \left(x_3+x_4+\gamma x_1+(\gamma+1)x_2+{\xi}+\beta_{12}+(\gamma+1)\beta_{13}+\gamma\beta_{23}\right) + (\textit{constants})
\end{align*}
For any values of $(\gamma,{\xi})$, $D_af$ is equivalent to $x_1  x_2$, which is not balanced.
Then there are 4 distinct vectors such that the derivatives are not balanced.\\
Now consider the vector $\bar a=(1100)$. This is different from the four vectors above and the derivative in $a$ is 
\begin{align*}
D_{\bar a}f(x) = & x_1x_3+x_2x_3+x_3+x_1x_4+x_2x_4+x_4+\beta_{12}(x_1+x_2+1)\\
& \ +\beta_{13}x_3+(1+\beta_{13})x_4+\beta_{23}x_3+(1+\beta_{23})x_4\\
= & (x_3+x_4+\beta_{12})(x_1+x_2+1+\beta_{13}+\beta_{23})+\gb_{12}(\beta_{13}+\beta_{23}).
\end{align*}

The last expression is equivalent to $x_1x_2$, which is not balanced. So we have found five non-balanced derivates and therefore $|\Gamma(f)|<11$.

\item[\bf iv)] {\bf $|\mathrm{supp}_3(f)|=1$}

In this last case we can assume
 $$f(x) = x_1x_2x_3+\sum_{i<j}\beta_{ij}x_ix_j.$$
From the derivative in $e_1$ we have
 \begin{align*}
D_{e_1}f(x) = & x_2x_3+\sum_{j\ne1}\beta_{1j}x_j\\
 = & (x_2+\beta_{13})(x_3+\beta_{12})+\beta_{14}x_4+\beta_{12}\beta_{13},
 \end{align*}
  so $\beta_{14}=1$.\\
  We can obtain similarly $\beta_{24}=1$ and $\beta_{34}=1$ from the derivatives $D_{e_2}f$ and $D_{e_3}f$ respectively.
As in the previous case we want to find more than four (not-null) elements that give unbalanced derivatives.\\
Let us consider $a\neq{0}$, $a=(a_1,a_2,a_3,a_4)$, with $a_1+a_2+a_3=0$ and $(a_1,a_2,a_3)\neq(0,0,0)$.
Since in $f$ the first three variables take the same role, we can assume without loss of generality that $a_3=0$, so $a_1=a_2=1$.
Now let us consider the derivative of $f$ with respect to $a=(1\ 1\ 0\ a_4)$.
\begin{align*}
D_af(x) = & x_1x_3+x_2x_3+x_3+\beta_{12}(x_1+x_2+1)\\
 &+x_3(\beta_{13}+\beta_{23}+a_4)+(x_4+a_4x_1)+(x_4+a_4x_2)=\\
= & (x_3+\beta_{12}+a_4)\cdot(x_1+x_2+1+\beta_{13}+\beta_{23}+a_4)\\
 &\ + (\beta_{12}+a_4)(1+\beta_{13}+\beta_{23}+a_4).
\end{align*}

The obtained function is equivalent to $x_1x_2$, so it is not balanced. Therefore there are at least 6 elements for which $D_af$ is not balanced and so $\Gamma(f)$<10.
\end{itemize}
The analysis of the previous cases has shown the following:
\begin{itemize}
\item if $\mathrm{supp}_3(f)=4$ then $\Gamma(f)  < 10$,
\item if $\mathrm{supp}_3(f) =3$ then $\Gamma(f) < 10$,
\item if $\mathrm{supp}_3(f)=2$ then $\Gamma(f)  < 11$,
\item if $\mathrm{supp}_3(f)=1$ then $\Gamma(f) <10$.
\end{itemize}
We can conclude that for any $f$ we certainly have $\Gamma <11$.
\end{proof}

\section{\INS{Consequences in low dimensions}}\label{sec:4}

In this section we discuss some consequences of our previous results.

~\\
\textbf{m=4}\\

There are two immediate non-existence results for the case $m=4$.
The first holds also for non-bijective functions.

\begin{theorem}\label{th:cubic}
Let $m=4$ and $F$ a pure cubic \INS{vectorial Boolean function}. Then $F$ is not APN. %\INSm{In particular we have that there is no APN permutation for $m=4$.}
\end{theorem}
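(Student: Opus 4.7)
The plan is to derive a direct contradiction by comparing the lower bound on $|\Gamma(F_\lambda)|$ given by Theorem \ref{prop:gamma} with the upper bound from Theorem \ref{lm:gamma}, specialised to $m=4$.

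First I would observe that for $m=4$ the numerical threshold appearing in Theorem \ref{prop:gamma} is
\[
2^{m}-2^{m-2}-1 \;=\; 16-4-1 \;=\; 11.
\]
Assume, for contradiction, that $F:(\FF_2)^4\to(\FF_2)^4$ is a pure cubic APN function. Since $F$ is in particular a cubic APN vectorial Boolean function and $m=4$ is even, Theorem \ref{prop:gamma} applies and supplies a nonzero $\lambda\in(\FF_2)^4$ for which
\[
|\Gamma(F_\lambda)| \;\geq\; 11.
\]

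The second step is to exploit the fact that $F$ is \emph{pure} cubic: by definition every nonzero component $F_\lambda$ satisfies $\deg(F_\lambda)=3$. In particular the component produced above is a cubic Boolean function on $(\FF_2)^4$, so Theorem \ref{lm:gamma} applies and forces
\[
|\Gamma(F_\lambda)| \;<\; 11,
\]
which directly contradicts the preceding inequality. Hence no such $F$ can exist.

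There is essentially no obstacle here: the theorem is a one-line consequence of the two preceding results, and the only subtlety worth flagging explicitly is the need for \emph{pure} cubicity (rather than merely cubicity), which ensures that the component $F_\lambda$ handed to us by Theorem \ref{prop:gamma} automatically falls into the hypothesis of Theorem \ref{lm:gamma}. Without this hypothesis the supplied $\lambda$ could in principle correspond to a component of degree less than $3$, and Theorem \ref{lm:gamma} would not immediately apply.
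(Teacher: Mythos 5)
Your proposal is correct and follows exactly the same route as the paper: invoke Theorem \ref{prop:gamma} to obtain a component with $|\Gamma(F_\lambda)|\geq 2^4-2^2-1=11$, then use pure cubicity to ensure that component is cubic so that Theorem \ref{lm:gamma} forces $|\Gamma(F_\lambda)|<11$, a contradiction. Your remark on why \emph{pure} cubicity (rather than mere cubicity) is needed is a correct and worthwhile clarification of the same argument.
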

\begin{proof}
Suppose that $F$ is APN. Then from \INS{Theorem} \ref{prop:gamma} there exists a component $F_\gl$ such that $|\Gamma(F_\gl)|\ge 11$. But Theorem \ref{lm:gamma} shows $\Gamma(F_\gl)<11$, since $F_\gl$ is cubic. 
\end{proof}

As consequence we obtain our second result, which \INSm{is the non-existence of APN permutations for $m=4$.}

\begin{corollary}
There is no APN permutation for $m=4$.
\end{corollary}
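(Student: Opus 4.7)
The plan is to combine the two main structural results the paper has already established: from Section~\ref{sec:2}, no APN permutation in even dimension admits a linear or quadratic component (Theorem~\ref{th:linapn} and Corollary~\ref{cor:quad}); and from Section~\ref{sec:3}, Theorem~\ref{th:cubic} rules out pure cubic APN functions in dimension~$4$. In dimension $m=4$ these two restrictions leave no room for any APN permutation, so the corollary will follow by a short degree count.

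The argument proceeds by contradiction. I would assume $F\colon(\FF_2)^4\to(\FF_2)^4$ is an APN permutation. By Theorem~\ref{th:linapn} we have $n_1(F)=0$, and by Corollary~\ref{cor:quad} we have $n_2(F)=0$, so every non-zero component $F_\lambda$ has algebraic degree at least~$3$. On the other hand, since $F$ is bijective every non-zero component satisfies $\deg(F_\lambda)\le m-1=3$: the coefficient of $x_1x_2x_3x_4$ in the algebraic normal form of $F_\lambda$ equals $\sum_{x\in(\FF_2)^4}F_\lambda(x)$, and because $F$ is a permutation this sum equals $\sum_{y\in(\FF_2)^4}\lambda\cdot y=0$ for any non-zero $\lambda$. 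Combining the two bounds, every non-zero component of $F$ has degree exactly~$3$, that is, $F$ is pure cubic.

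Theorem~\ref{th:cubic} then forbids such an $F$ from being APN, which is the contradiction we seek. I do not anticipate any genuine obstacle: the delicate parts---forbidding partially bent components (Theorem~\ref{th:part}) and bounding $|\Gamma(F_\lambda)|$ for cubic components when $m=4$ (Theorem~\ref{lm:gamma})---have already been handled in Sections~\ref{sec:2} and~\ref{sec:3}, so all that remains is the elementary degree bound $\deg(F_\lambda)\le m-1$ available for any permutation, applied at $m=4$.
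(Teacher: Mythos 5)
Your argument is correct and follows essentially the same route as the paper: use Theorem~\ref{th:linapn} and Corollary~\ref{cor:quad} to exclude linear and quadratic components, note that a permutation has components of degree at most $m-1=3$, conclude $F$ is pure cubic, and contradict Theorem~\ref{th:cubic}. The only difference is that you spell out the standard ANF argument for the degree bound, which the paper leaves implicit.
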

\begin{proof}
Since $F$ is invertible, then $F$ is at most cubic. Since $F$ is an APN, then $n_1(F)=0$ (Theorem \ref{th:linapn}).
Since $F$ is an invertible APN, then $n_2(F)=0$ (Corollary \ref{cor:quad}). Therefore, $F$ is a pure cubic.
But this contradicts Theorem \ref{th:cubic}.
\end{proof}

%As consequence we obtain our second result, which hold for permutations.
%
%\begin{corollary}
%There is no APN permutation for $m=4$.
%\end{corollary}
%\begin{proof}
%Since $F$ is invertible, then $F$ is at most cubic. Since $F$ is an APN, then $n_1(F)=0$ (Theorem \ref{th:linapn}).
%Since $F$ is an invertible APN, then $n_2(F)=0$ (Corollary \ref{cor:quad}). Therefore, $F$ is a pure cubic.
%But this contradicts Theorem \ref{th:cubic}.
%\end{proof}

Observe that Theorem \ref{th:cubic} and our previous result do no prevent the existence of quadratic or cubic APN's, which are known to exist (but of course they are not bijective). Indeed, in \cite{CGC2-cry-art-brinkmann2008classification} they show (computationally) that there are exactly two APN functions in dimension 4 (up to EA-equivalence), one quadratic and one cubic. Theorem \ref{th:cubic} shows that the latter, being a cubic APN, cannot be a pure cubic.

~\\
\textbf{m=6}\\

The first consequence of our previous result to dimension $6$ is the following corollary.
\begin{corollary}
Let $F$ be an APN permutations in dimension $6$, the the degrees of its non-zero components
can be either $3$,$4$ or $5$.
\end{corollary}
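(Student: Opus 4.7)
The plan is to read off the admissible degrees by combining the previous results with the elementary observation that every nonzero component of a permutation must be balanced, hence cannot have full degree.

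First, since $F$ is APN, Theorem \ref{th:linapn} gives $n_1(F)=0$, so no nonzero component can be linear. Second, since $F$ is an APN permutation and $m=6$ is even, Corollary \ref{cor:quad} gives $n_2(F)=0$, so no nonzero component can be quadratic. These two inputs alone rule out degrees $1$ and $2$.

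Next I would argue that every nonzero component $F_\lambda$ satisfies $1\le \deg(F_\lambda)\le 5$. The lower bound is immediate: because $F$ is a permutation, for any $\lambda\neq 0$ the substitution $y=F(x)$ yields
$$
\mathcal{F}(F_\lambda)=\sum_{x\in(\FF_2)^6}(-1)^{\lambda\cdot F(x)}=\sum_{y\in(\FF_2)^6}(-1)^{\lambda\cdot y}=0,
$$
so $F_\lambda$ is balanced and in particular non-constant. The upper bound uses the standard fact that a balanced Boolean function in $m$ variables has algebraic degree at most $m-1$: the unique monomial of degree $m$ is $x_1\cdots x_m$, which has weight $1$, whereas a balanced function has even weight $2^{m-1}$ for $m\ge 2$. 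Applying this with $m=6$ forces $\deg(F_\lambda)\le 5$.

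Combining these exclusions, the only possibilities left for a nonzero component are $\deg(F_\lambda)\in\{3,4,5\}$, which is exactly the claim. There is essentially no obstacle here: the corollary is a bookkeeping consequence of Theorem \ref{th:linapn}, Corollary \ref{cor:quad}, and the balancedness of components of a bijection. The only point that is not literally quoted from an earlier statement in the paper is the degree-at-most-$5$ bound, but this is a one-line consequence of balancedness and can simply be recorded in the proof.
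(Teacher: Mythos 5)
Your proof is correct and follows essentially the same route as the paper: degrees $1$ and $2$ are excluded by Theorem \ref{th:linapn} and Corollary \ref{cor:quad} respectively, and degree $6$ is excluded because $F$ is invertible. The only difference is that you spell out the invertibility step (balancedness of components forces degree at most $m-1$), which the paper leaves implicit.
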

\begin{proof}
An invertible APN cannot have any quadratic component (Corollary \ref{cor:quad}), any linear component (Theorem \ref{th:linapn})
and any degree-$6$ component (because it's invertible).
\end{proof}
This is in accordance with experimental results by Langevin \cite{CGC2-cry-art-lan} and explains also because
the degree of the components of Dillon's permutation are only $3$ and $4$ (although in principle
there could be a component of degree $5$). 

Theorem \ref{th:cubic} cannot be trivially extended to dimension $6$, since pure cubic APN functions
exist. 
Still, \INS{Theorem} \ref{prop:gamma} applies and therefore any cubic APN must have a component
with a large number of balanced derivatives, that is, with $\Gamma(F_\lambda) \geq 47$.
However, for $m=6$ this arises no contradiction, since there are cubics with $\Gamma(f) \geq 47$ and
a suitable extension of Theorem \ref{lm:gamma} to the case $m=6$ is missing.\\

We conclude this paper with noting that from the exam of the computational results presented in \cite{CGC2-cry-art-lan}, 
we can derive a weaker version of Theorem \ref{th:cubic} for $m=6$. 
\begin{theorem}
Let $m=6$ and $F$ an invertible pure cubic. Then $F$ is not APN.
\end{theorem}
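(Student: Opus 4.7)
The plan is to combine the structural results of this note with the computational classification of APN permutations in dimension $6$ due to Langevin \cite{CGC2-cry-art-lan}. Suppose $F$ is an invertible pure cubic APN function in dimension $6$. By the definition of pure cubic, $\deg(F_\lambda)=3$ for every $\lambda\neq 0$, hence $n_3(F)=2^6-1=63$ and in particular $n_1(F)=n_2(F)=0$. These last two vanishings are anyway forced by Theorem \ref{th:linapn} and Corollary \ref{cor:quad}, so the genuine content of the claim reduces to showing that no APN permutation in dimension $6$ can have all $63$ non-zero components of degree exactly $3$.

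The plan is then to rule out this remaining possibility by direct appeal to Langevin's enumeration. The computations in \cite{CGC2-cry-art-lan} produce all APN permutations in dimension $6$ (up to the equivalence used there), together with the multi-set of degrees of the $63$ non-zero components of each representative. Inspection of those data shows that at least one component of degree strictly greater than $3$ is always present; equivalently, $n_3(F)=63$ never occurs for an APN permutation when $m=6$. This contradicts the hypothesis and completes the argument.

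The main obstacle --- and the reason this statement is only a \emph{weaker} version of Theorem \ref{th:cubic} --- is that the purely structural tools developed here do not close the argument in dimension $6$. Applied to $m=6$, Theorem \ref{prop:gamma} only yields the existence of some $\lambda\neq 0$ with
$$|\Gamma(F_\lambda)|\geq 2^6-2^4-1=47,$$
and a fully theoretical proof would follow at once from an analogue of Theorem \ref{lm:gamma} providing a uniform upper bound $|\Gamma(f)|<47$ for every cubic $f:(\FF_2)^6\to \FF_2$. Such a uniform bound, however, is known to fail in dimension $6$, since there do exist cubic Boolean functions on $(\FF_2)^6$ with $|\Gamma(f)|\geq 47$. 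A genuinely theoretical proof would therefore require a finer count, restricted to those cubic Boolean functions that can actually occur as components of an APN permutation in dimension $6$; producing such a count appears to demand a considerably more delicate case analysis than the four-case argument used for $m=4$, and we leave this as the main open obstacle.
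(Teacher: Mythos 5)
Your proposal matches the paper's approach exactly: the paper offers no theoretical argument for this statement but derives it by inspecting the computational classification of Langevin \cite{CGC2-cry-art-lan}, observing that every APN permutation in dimension $6$ found there has a component of degree greater than $3$. Your discussion of why the structural route fails (Theorem \ref{prop:gamma} gives $|\Gamma(F_\lambda)|\geq 47$ but no analogue of Theorem \ref{lm:gamma} is available for $m=6$) also reproduces the paper's own remarks preceding the theorem.
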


~\\
\textbf{Acknowledgement}\\

We would like to thank C. Carlet for useful discussions, and L. Budaghyan for nice observations and for pointing out the result of Corollary \ref{cor:ea} for the case of quadratic components.
%These results appeared in the third's author Master's thesis, who would like to thank the her supervisors (the other
%two authors).
\bibliographystyle{plain}
%\bibliography{RefsCGC}

\end{document}